\documentclass[12pt,twoside,english]{article}
\PassOptionsToPackage{natbib=true, maxnames=99}{biblatex}
\usepackage[utf8]{inputenc}
\usepackage{geometry}
\usepackage{verbatim}
\usepackage{url}
\usepackage{amsmath}
\usepackage{amsthm}
\usepackage{amssymb}
\usepackage{setspace}
\usepackage[hidelinks]{hyperref}

\usepackage{todonotes}

\makeatletter

\DeclareTextSymbolDefault{\textquotedbl}{T1}

\theoremstyle{plain}
\newtheorem{assumption}{\protect\assumptionname}
\theoremstyle{definition}
\newtheorem{defn}{\protect\definitionname}
\theoremstyle{plain}
\newtheorem{lem}{\protect\lemmaname}
\theoremstyle{definition}
 \newtheorem{example}{\protect\examplename}
\theoremstyle{plain}
\newtheorem{thm}{\protect\theoremname}
\theoremstyle{remark}
\newtheorem*{rem*}{\protect\remarkname}

\usepackage[font={footnotesize,it}]{caption}

\makeatother

\usepackage{babel}
\usepackage[style=ext-authoryear,articlein=false,]{biblatex}
\providecommand{\assumptionname}{Assumption}
\providecommand{\definitionname}{Definition}
\providecommand{\examplename}{Example}
\providecommand{\lemmaname}{Lemma}
\providecommand{\remarkname}{Remark}
\providecommand{\theoremname}{Theorem}

\begin{document}
\title{Stable Matching with Peer-Dependent Preferences:
Existence and Cutoff Characterization}
\author{Jacob D. Leshno\thanks{{\footnotesize{}University of Chicago Booth School of Business, \protect\url{jacob.leshno@ChicagoBooth.edu}.
I thank Claudia Allende, Eduardo Azevedo, Guilherme Carmona, Chris Campos, Yeon-Koo Che, John Hatfield, Adam Kapor, Krittanai Laohakunakorn, Jack Mountjoy, Bobby Pakzad-Hurson, and Alvin Roth for helpful comments and discussions.
I thank Anand Shah for research support. This work is supported by the Robert H. Topel Faculty Research Fund at the University of Chicago Booth School of Business.}}}
\maketitle
\begin{abstract}
This paper presents a framework for stable matching markets that accommodates the peer-dependent preferences employed in empirical work. We show the existence of a stable matching in a continuum economy and of an approximately stable matching in a large finite sampled economy under assumptions satisfied by standard empirical specifications. The analysis builds on a tractable characterization of stable matchings with peer-dependent preferences in terms of supply and demand equations along with a rational-expectations condition. 
\end{abstract}

\newpage{}

\section{Introduction }

\begin{verse}
\noindent {\small{}``There is a collective action issue $\dots$ there's no French teacher for one student. But there's a program if 15 come, if 20 come. But we all have to, then, take one step forward at exactly the same time.'' \vspace{-1cm} }{\small\par}
\end{verse}
\noindent \begin{flushright}
{\small{}-- Nice White Parents}\footnote{Excerpt from episode 1 of the podcast Nice White Parents by Serial and The New York Times \citep{joffe-walt2020introducing}. }
\par\end{flushright}

Motivated by evidence that peer composition affects students' preferences over schools \citep{abdulkadirouglu2020parents,rothstein2006good,beuermann2023good}, empirical work on school choice increasingly incorporates peer-dependent preferences \citep{epple2018superintendent,allende2019competition,campos2024social}.
Similar peer-dependent preferences arise in urban models of location sorting with endogenous amenities \citep{bayer2007unified,diamond2016determinants,almagro2025location}, as well as in models of worker--firm sorting with endogenous firm productivity \citep{abowd1999high,bender2018management}.

Matching theory has played a central role in the design and empirical analysis of school choice programs \citep{abdulkadirouglu2003school,abdulkadirouglu2005new,agarwal2020revealed}, and a substantial theoretical literature has studied matching in the presence of peer effects and other externalities.\footnote{See the literature review below for a detailed discussion.} 
However, a central difficulty is that a stable matching may fail to exist under peer-dependent preferences. Such nonexistence may arise from two different sources. 
In some environments, nonexistence reflects a fundamental economic challenge that persists even in large markets. In others, nonexistence is only due to indivisibilities, and approximate solutions exist in sufficiently large markets. This distinction matters for empirical work that requires an equilibrium concept to deliver predictions in counterfactual environments. If nonexistence is driven only by such ``rounding errors,'' then approximate stability may be an empirically relevant solution concept.

We extend the continuum supply-and-demand framework of \citet{azevedo2016supply} to study stable matchings with peer-dependent preferences. This approach has three advantages. First, it allows us to abstract from indivisibilities, and therefore distinguish failures of stability that reflect fundamental economic forces from those that arise only from finite-market ``rounding errors'' caused by indivisibilities. Second, it retains the capacity constraints, priorities, and peer-dependent preferences that are central to empirical applications. Third, the supply-and-demand representation describes each student's choice as demand for her most preferred school among those for which she meets the admission cutoff. This formulation maps naturally into the discrete-choice models commonly used in empirical analyses of school choice.

In the model, student preferences over schools may flexibly depend on \emph{matching statistics}, which summarize the characteristics of students assigned to each school. These statistics can capture the types of peer characteristics commonly used in empirical work. For example, a student's preferences may depend on the average GPA of students assigned to a school, the gender composition of the school or neighboring schools, or whether sufficiently many students enroll to support a specialized language program. Preferences can be arbitrary functions of these matching statistics, allowing peer composition to affect preferences in general and even discontinuous ways.
The implied restriction is that preferences depend on peers through summary characteristics of the assignment rather than through the identities of particular students.\footnote{In particular, the formulation excludes settings such as couples seeking joint assignments, where preferences depend directly on the assignment of a particular individual; see, e.g., \citet{klaus2005stable}.} 

The standard notion of stability extends naturally to this environment: a matching is stable if there is no student--school pair such that the school is willing to admit the student and, given the matching statistics generated by the assignment, the student prefers that school to her assigned school.

We extend the \citet{azevedo2016supply} cutoff characterization and show that stable matchings are equivalent to cutoff--statistics pairs that satisfy market-clearing and rational-expectations conditions. This characterization represents stable matchings as solutions to a system of demand equations. The cutoffs and market-clearing condition capture the standard pairwise-stability logic, while the matching statistics and rational-expectations condition capture the coordination problem created by peer-dependent preferences.

Cutoffs are admission thresholds for each school. Analogous to prices, they determine each student's budget set: a student can attend a school if her priority exceeds that school's cutoff. Given a cutoff--statistics pair, the student's demand is her most preferred school among those in her budget set. Student preferences are those induced by the conjectured matching statistics. A cutoff--statistics pair is market-clearing and rational-expectations if assigning every student to her demanded school does not violate any capacity constraint and the conjectured matching statistics coincide with those generated by the resulting assignment. Every stable matching corresponds to such a cutoff--statistics pair, and conversely every market-clearing and rational-expectations cutoff--statistics pair induces a stable matching.

The demand characterization brings the model closer to a standard competitive-equilibrium framework and allows us to use the fixed-point approach commonly employed in that literature. Existence is not automatic, however. Standard fixed-point arguments require continuity, and demand need not be continuous in the matching statistics, even in a continuum economy. We provide a simple example showing that this is not merely a technical obstacle: a stable matching may fail to exist even in the continuum.

Despite this nonexistence example, stable matchings exist under a broad and natural condition.
Aggregate demand can be discontinuous only if an arbitrarily small change in matching statistics causes a positive mass of students to simultaneously change their ordinal rankings.
We rule out such (pathological) cases by introducing an assumption we term 
\emph{diversity of preferences}. This assumption requires that small changes in 
matching statistics affect only a small fraction of students' ordinal rankings. Diversity of preferences thereby ensures continuity of aggregate 
demand\footnote{Notably, our existence result does not require individual preferences to be continuous, nor does it rely on students being indifferent between schools.} and, hence, the existence of a stable matching. We argue that the 
assumption is natural and show in Section \ref{sec:App-econ-w-taste-shocks} that it is satisfied by standard empirical specifications.

The continuum existence result suggests that nonexistence in large finite markets may stem from indivisibilities, and that an approximately stable matching should exist. We formalize this intuition by considering finite economies sampled from a continuum economy. Under mild additional local regularity conditions, sufficiently large sampled economies admit an approximately stable matching with high probability. Our notion of approximate stability preserves capacity constraints exactly and requires only that any blocking pair offer the student at most a small gain. Thus, arbitrarily small frictions may suffice to sustain such a matching in practice, without requiring flexible capacities or other adjustments by colleges or the market designer.

The large-market approach, however, does not restore the full familiar structure of stable matching. As previous work has shown, peer-dependent preferences can lead to multiple stable matchings and failures of the lattice structure and the Rural Hospital Theorem. These failures persist even in the continuum economy: they reflect the coordination problem created by peer-dependent preferences, rather than finite-market indivisibilities. We provide examples illustrating these failures and the economic forces behind them.

\subsubsection*{Related literature}

This paper is motivated by empirical findings showing that student preferences for colleges or public schools may depend on the assignment of peers. Such dependency can arise for multiple reasons: Composition of peers affects student achievement (\citealt{duflo2011peer,lavy2011mechanisms,lavy2012inside,booij2017ability,schiltz2019does}); Labor market returns to a college major may depend on the number of students enrolling in related majors (\citealt{bianchi2020indirect}); Parents\textquoteright{} preferences for a school can depend on the composition of its enrolled students (\citealt{epple1998competition,epple2018superintendent,abdulkadirouglu2020parents,rothstein2006good,allende2019competition}); Additionally, students with incomplete information may update their preferences when observing the choices of others (\citealt{chakraborty2010two,liu2014stable}).

Previous work has been insightful and successful in showing how stable matching theory breaks with peer-dependent preferences, but gave positive results under stronger assumptions than what empirical model need. 

Prior research has been very informative in showing how classical stable-matching results can fail when preferences depend on peer assignments, and has established positive results under economically meaningful restrictions. However, these positive results impose restrictions that are too strong to accommodate the preference specifications commonly used in empirical work.
\citet{sasaki1996two} shows that stable matching may not exist in a finite indivisible matching market. \citet{echenique2007solution} considers a discrete many-to-one matching with peer-dependent preferences and develops an algorithm that finds core matchings if they exist, and suggests partial solutions for when the core is empty. 
The approach in this paper is closely related to that of \citet{fisher2016matching}, who study one-to-one matching with externalities in a large-market framework in which the externality generated by any individual agent is bounded. Their model, however, does not exploit the tractability afforded by a cutoff characterization and restricts peer effects to one-dimensional aggregate externalities. 
\citet{calsamiglia2020school} analyzes the Boston mechanism and allows student preferences to depend on a quality measure. \citet{dur2019school} considers students with neighbor-dependent preferences. \citet{che2022prestige} studies prestige seeking in competitive college admission. \citet{phan2024crowding} studies school assignment where students have preferences over crowding and suggest a new equilibrium notion that accommodates crowding. 

\citet{pycia2012stability} shows the existence of a stable matching when agent preferences are aligned. \citet{mumcu2010stable} provides a sufficient condition on the whole preference profile that guarantees the existence of a stable matching. Along with two conditions to restrict externalities in firms' choice sets, \citet{bando2012many} shows that stable matchings exist under an extension of substitutability. \citet{baccara2012field} analyzes an assignment problem with externalities due to benefits from proximity to others in a network. \citet{pycia2019matching} shows that a stable matching is guaranteed to exist when peer-dependent preferences satisfy an elegant substitutes assumption, but may fail to exist otherwise. \citet{rostek2020matching} shows existence under matching with complementary contracts. 

The previous literature imposes restrictions on student preferences that may not be compatible with empirical needs. Consider, for example, that one STEM-oriented student may seek to attend a school with other strong STEM students to benefit from having strong peers, while another student may seek to avoid the increased competition and pressure. A model with both types of students will violate the restriction on preferences required by the literature above. In contrast to these papers, the present paper allows for general peer-dependent preferences. Because we allow for preferences that violate the substituability conditions identified by \cite{pycia2019matching}, their results imply that a stable matching may fail to exist in a finite market. We circumvent the nonexistence of a stable matching by considering an approximate stable matching.

Contemporaneous work of \citet{Cox2022peer} studies a related question in economies where peer effects arise because student performance depends on their ranking within their class. Using data on college applicants in New South Wales, they identify peer-dependent preferences, show that standard matching mechanisms may fail to produce stable outcomes under such preferences, and propose a mechanism that restores stability. Subsequent work by \citet{carmona2023existence} shows the existence of a stable matching in a continuum economy when student utilities are continuous. 

\citet{epple1998competition}, \citet{epple2018superintendent}, \citet{barseghyan2019peer} theoretically study school choice in the presence of peer-dependent preferences. \citet{abdulkadirouglu2020parents} develops a framework for estimating student preferences for peer quality and finds that peer quality affects student preferences. \citet{allende2019competition} empirically studies school choice using administrative data from Peru. She estimates a structural model that accounts for peer-dependent preferences and evaluates counterfactual designs. \citet{campos2022impact} evaluates the effect of competition among schools and finds that school choice can improve the quality of schools. These models assume that schools do not face capacity constraints. The model in this paper provides a generalized framework that incorporates capacity constraints and provides a connection between these models and stable matching models. In addition, this paper provides results for finite sampled economies.

The model adds to a growing strand of papers that use matching models with a continuum of agents.\footnote{\cite{ellickson1999clubs} study a continuum model to analyze club formation and show the existence of a competitive equilibrium.} \citet{abdulkadirouglu2015expanding} and later work by \citet{ashlagi2014improving} uses a continuum model to guide the design of school choice mechanisms. The present model builds upon the continuum matching model of \citet{azevedo2016supply} and extends it to allow for peer-dependent preferences. \citet{che2019stable} and \citet{azevedo2018existence} use continuum models to analyze matching markets in which firm preferences exhibit complementarities. \citet{greinecker2018pairwise} and \citet{jagadeesan2017complementary} develop models with a continuum of agents and prove the existence of a stable matching in their models. The present model differs from the approach taken in these papers in that the model does not require continuity or other restrictions on individual preferences to show the existence of a stable matching. Our approach is similar to \citet{azevedo2013walrasian} in that we allow individual preferences to be discontinuous, but our diversity of preferences assumption implies continuity of aggregate demand.

Preferences in our model can depend on the distribution of student attributes assigned to each college, but not on the assignment of an individual student. In that, our model is distinct from matching models in which couples seek joint assignments (\citealt{klaus2005stable,kojima2013matching,ashlagi2014stability}).

\subsubsection*{Organization of the Paper}

Section \ref{sec:Model} presents the model and defines stable matchings for peer-dependent preferences. Section \ref{sec:Cutoff-Char} provides the cutoff characterization. The main result showing the existence of a stable matching and examples showing the necessity of the diversity of preferences assumption are in Section \ref{sec:Existence-and-Structure}. This section also explores the structure of the set of stable matchings and provides examples. Section \ref{sec:Approximately-Stability} defines and shows the existence of approximately stable matching in large sampled economies. Section \ref{sec:App-econ-w-taste-shocks} shows that a standard empirical framework satisfies our assumptions. Section \ref{sec:Conclusion} concludes. Additional examples are in Appendix \ref{sec:Additional-Examples}. Omitted proofs are in Appendix \ref{sec:Omitted-Calculations}.

\section{Matching with Peer-Dependent Preferences\label{sec:Model}}

There are finitely many colleges, and the set of colleges is denoted
$\mathcal{C}=\{c_{1},\dots,c_{J}\}$. Each college $c\in\mathcal{C}$
has capacity to admit a mass of $q_{c}>0$ students. A student is
described by a type $\theta=\left(\chi^{\theta},\gamma^{\theta},r^{\theta}\right)$, where $\chi^{\theta}$ specifies the student's attributes that can enter into other students' peer-dependent preferences, $\gamma^{\theta}$ specifies the student's peer-dependent preferences over colleges, and $r^{\theta}$ captures the student's priority at each college.

The vector $r^{\theta}\in\left[0,1\right]^{\mathcal{C}}$ specifies colleges' priorities for student $\theta$. College $c$ gives higher priority to student $\theta_{1}$ than to student $\theta_{2}$ if $r_{c}^{\theta_{1}}>r_{c}^{\theta_{2}}$. We refer to $r^{\theta}$ as $\theta$'s priority or rank. For brevity of notation, we assume all students and colleges are acceptable.\footnote{That is, any student prefers to match to any college rather than to remain unmatched, and any college that did not fill its capacity is willing to admit any student. 
This assumption simplifies notation without restricting the analysis. A student's outside option can be represented by a dummy college with unlimited capacity, while college-side unacceptability can be represented by a mass of dummy students corresponding to the college's unfilled seats.} 
The attributes of student $\theta$ are given by a vector $\chi^{\theta}\in\mathcal{X}$, where the set $\mathcal{X}\subset\mathbb{R}^{L}$ is bounded, closed, convex, and $0\in\mathcal{X}$. The preferences of student $\theta$ over colleges are parameterized by $\gamma^{\theta}\in\Gamma$, where $\Gamma$ is a subset of a Euclidean space. Student preferences are defined below, and can depend on the entire matching. 
Let  $\Theta=\mathcal{X}\times\Gamma\times\left[0,1\right]^{\mathcal{C}}$ denote the set of all student types.

An economy is $E=[\Theta,\mathcal{C},\eta,q]$ where $\eta$ is a
measure over $\Theta$. It is convenient to normalize $\eta$ so that
$\eta\left(\Theta\right)=1$ and interpret $\eta$ as the joint distribution
of student attributes, preferences, and priorities. Without loss of
generality, we normalize priorities so that $r_{c}^{\theta}$ is $\theta$'s
percentile in college $c$'s ranking.\footnote{Formally, $r_{c}^{\theta}=\eta\left(\left\{ \theta'\left|r_{c}^{\theta}>r_{c}^{\theta'}\right|\right\} \right)$.}

A matching is an $\eta$-measurable function $\mu:\Theta\to\mathcal{C}\cup\left\{ \phi\right\} $,
where we use $\phi$ to denote that a student is unmatched. With slight
abuse of notation we use $\mu\left(c\right)$ instead of $\mu^{-1}\left(c\right)$
to denote the set of students matched to college $c$. To rule out
multiple matchings that differ on a zero measure set, we restrict
attention to right-continuous matchings.\footnote{A matching is right continuous \citep{azevedo2016supply} if for every
$c\in\mathcal{C}$ the set $\left\{ \theta\mid\mu\left(\theta\right)\prec^{\theta|\mu}c\right\} $
is open, with $\prec^{\theta|\mu}$ defined below. } 

Student preferences over colleges can depend on the attributes of
students assigned to each college. Given a matching $\mu$, the \emph{matching
statistics} of college $c$ are

\begin{equation}
\sigma_{c}\left(\mu\mid\eta\right)=\int_{\theta\in\mu\left(c\right)}\chi^{\theta}d\eta\,.\label{eq:match-stats}
\end{equation}
We write $\sigma_{c}\left(\mu\right)$ when $\eta$ is clear from
context, and use $s=\sigma\left(\mu\right)=\bigl(\sigma_{c}\left(\mu\right)\bigr)_{c\in\mathcal{C}}\in\mathcal{X^{C}}\subset\mathbb{R}^{J\times L}$
to denote the full vector of matching statistics.\footnote{The fact that $\sigma_{c}\left(\mu\right)\in\mathcal{X}$ follows
from the definition of $\mathcal{X}$ as a convex and closed set that
contains the origin. The particular $\sigma$ is chosen to simplify
exposition. Our results apply for any $\sigma\left(\cdot\right)$
such that there exists a constant $M>0$ such that $\left\Vert \sigma\left(\mu\right)-\sigma\left(\mu'\right)\right\Vert _{\infty}\leq M\cdot\eta\left(\left\{ \theta\mid\mu\left(\theta\right)\neq\mu'\left(\theta\right)\right\} \right)$
for all $\mu,\mu'$.} We use $s_{c}=\sigma_{c}\left(\mu\right)$ to denote the matching
statistics for college $c$. For example, $s_{c}=\sigma_{c}\left(\mu\right)$
can capture the average SAT score of students assigned to $c$, gender
balance, or the mass of admitted students interested in becoming English
majors.\footnote{Formally, the function $\sigma$ defined above can include both the
total SAT score of students assigned to the college as well as the
total mass of students assigned to the college. We allow student utilities
that are arbitrary functions of the matching statistics and can depend
on the ratio of these statistics, i.e., the average SAT score.} 

We allow student preferences over colleges to depend on the entire matching
$\mu$ through the statistics $s=\sigma\left(\mu\right)$. The utility
of student $\theta=\left(\chi^{\theta},\gamma^{\theta},r^{\theta}\right)$
of being assigned to $c=\mu\left(\theta\right)$ given the matching
$\mu$ is denoted by $u^{\theta}\left(c;\sigma\left(\mu\right)\right)$.
We allow $u^{\theta}\left(c;\sigma\left(\mu\right)\right)$ to be
an arbitrary function of the matching statistics $s=\sigma\left(\mu\right)$,
preference parameters $\gamma^{\theta}$, and the college $c\in\mathcal{C}$.
For brevity, we also use $u^{\theta}\left(c;\mu\right)$ or $u^{\theta}\left(c;s\right)$
to denote the student's utility. Unless stated otherwise, we normalize
the utility of being unmatched to $0$, i.e., $u^{\theta}\left(\phi;\mu\right)=0$
for any student $\theta$ and any matching $\mu$.

For $s=\sigma\left(\mu\right)$ we write $\succ^{\theta|s}$ or $\succ^{\theta|\mu}$
to denote the preference ordering induced by $u^{\theta}\left(c;s\right)$.\footnote{That is, for $s=\sigma\left(\mu\right)$ we write $c_{1}\succ^{\theta|\mu}c_{2}$
if $u^{\theta}\left(c_{1};s\right)>u^{\theta}\left(c_{2};s\right)$.
To simplify notation, we also use $\succ^{\theta|s}$ for $s$ values
which are not in the image of $\sigma\left(\cdot\right)$. We arbitrarily
define $\succ^{\theta|s}$ for such $s$ values.} To simplify notation, we break ties in student preferences using
an arbitrary tie-breaking rule.\footnote{For example, break ties by setting $c_{i}\succ^{\theta|\mu}c_{j}$
whenever $u^{\theta}\left(c_{i};s\right)=u^{\theta}\left(c_{j};s\right)$
and $i<j$.} This is without loss, because any matching that is stable given the
refined preferences is also stable given the unrefined preferences. 

Throughout the paper, we impose the following standard assumption,
which corresponds to colleges having strict preferences over students. This assumption entails no loss of generality when colleges' priority rankings contain ties: if ties are broken arbitrarily, any matching that is stable under the resulting strict priorities is also stable under the original priorities.

\begin{assumption}
\label{ass:(Strict-preferences)}For any college $c\in\mathcal{C}$
and priority $x\in\mathbb{R}$ the set of students that have priority
equal to $x$ is of $\eta$-measure zero, that is, $\eta\left(\left\{ \theta\in\Theta\mid r_{c}^{\theta}=x\right\} \right)=0$.
\end{assumption}
Assumption \ref{ass:(Strict-preferences)} implies that students are
non-atomic, and, consequently, that the matching statistics are invariant to the
assignment of a single student. In other words, if two matchings $\mu,\mu'$
differ only in the assignment of student $\theta_{1}$ (that is, $\mu\left(\theta\right)=\mu'\left(\theta\right)$
for any $\theta\neq\theta_{1}$), then both matchings have identical
matching statistics (that is, $\sigma\left(\mu\right)=\sigma\left(\mu'\right)$).
Thus, given Assumption \ref{ass:(Strict-preferences)}, we can interpret $\succ^{\theta|\mu}$ as student $\theta$'s preferences over his potential college assignment given that the remaining students are assigned according to $\mu$. Given this, we define stable matchings as follows.
\begin{defn}
A student-college pair $\left(\theta,c\right)$ blocks matching $\mu$
if $c\succ^{\theta|\mu}\mu\left(\theta\right)$ and either $\eta\left(\mu\left(c\right)\right)<q_{c}$
or there is a student $\theta'\in\mu\left(c\right)$ such that $r_{c}^{\theta}>r_{c}^{\theta'}$.
A matching $\mu$ is stable if $\eta\left(\mu(c)\right)\leq q_{c}$
for all $c\in\mathcal{C}$ and there is no student-college pair $\left(\theta,c\right)$
that blocks $\mu$.
\end{defn}
In other words, a pair $\left(\theta,c\right)$ blocks the matching
$\mu$ if both would prefer to be matched to each other, holding the
rest of the match constant. Implicitly, this definition assumes that the endogenous student preferences $\succ^{\theta|\mu}$ of student $\theta$ are independent of the assignment of student $\theta$.
This guarantees that student $\theta$ has well-defined preferences
over potential assignments.\footnote{In particular, this avoids problems that can arise in discrete models.
For example, consider a discrete model with an extremely high GPA 
student who increases the average GPA of any college she is assigned to, and has a preference for the college with the highest GPA. Such
a student prefers whichever college she is assigned to.}

This notion of stability is a pairwise stability notion. That is,
this stability notion requires that \emph{pairs} of participants
cannot benefit by deviating from the match and forming alternative
matches. In standard matching models, pairwise stability is equivalent
to the core; that is, an arbitrary \emph{coalition} of participants
cannot benefit by deviating from the match and forming alternative
matches. In contrast, our pairwise stability notion is not equivalent
to the core, because a large coalition of agents might be able to
benefit by forming alternative matches and changing the matching statistics.\footnote{A coalition of students can benefit from forming an alternative match only if this alternative match has different matching statistics. Thus, a coalition of measure 0 of students cannot benefit by forming an alternative match.}
See Example \ref{exa:multiple-matchings} for an illustration.

\section{Characterization of Stable Matchings\label{sec:Cutoff-Char}}

We show in this section that any stable matching can be represented and solved for by cutoffs and matching statistics. This tractable representation provides a natural framework for empirical work, and it is used to prove the results in the following sections.

\emph{Cutoffs} $P\in\left[0,1\right]^{\mathcal{C}}$ specify a minimal
priority required for admission to each college $c\in\mathcal{C}$.
Given cutoffs $P$, student $\theta$ has sufficient priority to be
admitted to the set of colleges $B^{\theta}\left(P\right)=\left\{ c\in\mathcal{C}\,\mid\,r_{c}^{\theta}\geq P_{c}\right\} $.
We refer to $B^{\theta}\left(P\right)$ as $\theta$'s \emph{budget
set} given $P$. Given cutoffs $P$ and statistics $s$ we define
the \emph{demand} of student $\theta$ as\footnote{If $B^{\theta}\left(P\right)=\emptyset$ we define $D^{\theta}\left(P,s\right)=\phi$,
that is, $\theta$ demands to be unassigned.}
\[
D^{\theta}\left(P,s\right)=\max_{\succ^{\theta|s}}B^{\theta}\left(P\right)\,.
\]
Given $\left(P,s\right)\in\left[0,1\right]^{\mathcal{C}}\times\mathcal{X^{C}}$, the total mass of students that demands college $c\in\mathcal{C}$ is 
\[
D_{c}\left(P,s\mid\eta\right)=\eta\left(\left\{ \theta\mid D^{\theta}\left(P,s\right)=c\right\} \right)\,.
\]
 The \emph{demand }given $\left(P,s\right)$ is the vector 
\[
D\left(P,s\mid\eta\right)=\bigl(D_{c}\left(P,s\mid\eta\right)\bigr)_{c\in\mathcal{C}}\,,
\]
 specifying the mass of students that demand each college. We write
$D\left(P,s\right)$ when $\eta$ is clear from context.
\begin{defn}
\label{def:RE-MC}A pair $\left(P,s\right)\in\left[0,1\right]^{\mathcal{C}}\times\mathcal{X^{C}}$
of cutoffs and matching statistics is \emph{rational expectations
and market clearing }if
\begin{enumerate}
\item[(i)] \emph{Market Clearing:} 
\[
D\left(P,s\right)\leq q\,,
\]
and $P_{c}=0$ for any college $c$ for which $D_{c}\left(P,s\right)<q_{c}$.
\item[(ii)] \emph{Rational Expectations:} 
\[
s=\sigma\left(\mu\right)
\]
 for the matching $\mu$ defined by $\mu\left(\theta\right)=D^{\theta}\left(P,s\right)$. 
\end{enumerate}
\end{defn}
The matching $\mu\left(\theta\right)=D^{\theta}\left(P,s\right)$
can be implemented by publishing $\left(P,s\right)$. Each student
$\theta$ forms preferences over colleges $\succ^{\theta|s}$ given the matching statistics $s$. Using these preferences, each student $\theta$ chooses his most preferred college from the set of colleges $B^{\theta}\left(P\right)=\left\{ c\in\mathcal{C}\,\mid\,r_{c}^{\theta}\geq P_{c}\right\} $ for which $\theta$ has sufficient priority for admission given $P$. 
Condition (i) of Definition \ref{def:RE-MC} ensures that $\mu$ does not assign any college more students than the college's capacity, and that a college that did not fill its capacity is willing to admit any student. Condition (ii) ensures that students choose among colleges with the correct expectation of the matching statistics $s$. In other words, condition (ii) implies that if we assign each student $\theta'\neq\theta$ to $\mu\left(\theta'\right)$, student $\theta$'s ranking over his college potential assignments is indeed $\succ^{\theta|s}=\succ^{\theta|\mu}$.

The following lemma shows that stable matchings are equivalent to
rational expectations and market-clearing pair. 
\begin{lem}
\label{lem:(cutoff<->match)}If the pair $\left(P,s\right)$ is rational
expectations and market clearing, then the matching $\mu$ defined
by $\mu\left(\theta\right)=D^{\theta}\left(P,s\right)$ is a stable
matching.

Conversely, if $\mu$ is a stable matching then the pair $\left(P,s\right)$
defined by 
\begin{align*}
s=\sigma\left(\mu\right),\,\,\,\,\,\,\,\,\,\, &  & \begin{array}{cc}
P_{c} & =\begin{cases}
\inf\left\{ r_{c}^{\theta}\mid\theta\in\mu(c)\right\}  & \mbox{ if }\eta\left(\mu\left(c\right)\right)=q_{c}\\
0 & \mbox{ if }\eta\left(\mu\left(c\right)\right)<q_{c}
\end{cases}\end{array}
\end{align*}
is rational expectations and market clearing. 
\end{lem}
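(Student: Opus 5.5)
The proof proceeds in two directions, following the cutoff characterization of \citet{azevedo2016supply}. The key observation is that once the matching statistics are pinned down, the students' preferences $\succ^{\theta|s}$ are fixed. After that, every step is the standard argument for exogenous preferences. The peer-dependence enters only through the rational-expectations identity $s=\sigma(\mu)$, which guarantees $\succ^{\theta|\mu}=\succ^{\theta|s}$.

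\emph{First direction (rational expectations and market clearing implies stable).} Let $(P,s)$ be rational expectations and market clearing, and define $\mu(\theta)=D^{\theta}(P,s)$.
\begin{enumerate}
\item Measurability and right-continuity of $\mu$ follow from the definition through the budget sets $\{r_c^\theta\ge P_c\}$, modulo measure-zero issues handled by Assumption \ref{ass:(Strict-preferences)}.
\item Capacity feasibility is immediate, since $\eta(\mu(c))=D_c(P,s)\le q_c$ by market clearing.
\item Rational expectations gives $\sigma(\mu)=s$, so $\succ^{\theta|\mu}=\succ^{\theta|s}$.
\item Suppose $(\theta,c)$ blocks $\mu$, so $c\succ^{\theta|s}\mu(\theta)$. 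Because $\mu(\theta)$ is $\theta$'s $\succ^{\theta|s}$-maximum in $B^\theta(P)$, it must be that $c\notin B^\theta(P)$, that is, $r_c^\theta<P_c$. In particular $P_c>0$, so market clearing forces $D_c(P,s)=q_c$, which rules out the first blocking clause.
\item For the second clause, every $\theta'\in\mu(c)$ has $r_c^{\theta'}\ge P_c>r_c^\theta$. So no admitted student has lower priority than $\theta$, and $(\theta,c)$ does not block.
\end{enumerate}

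\emph{Converse (stable implies rational expectations and market clearing).} Let $\mu$ be stable, and let $s=\sigma(\mu)$ and $P$ be as in the statement. Again $\succ^{\theta|\mu}=\succ^{\theta|s}$. The main step is to show that $D^\theta(P,s)=\mu(\theta)$ for $\eta$-almost every $\theta$. Once this holds, rational expectations follows because $\sigma$ depends only on the $\eta$-a.e. class of the matching. Market clearing then follows from $D_c=\eta(\mu(c))\le q_c$, together with the fact that $P_c=0$ by construction whenever $\eta(\mu(c))<q_c$.

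To prove the main step, I argue two claims.
\begin{enumerate}
\item \emph{$\mu(\theta)\in B^\theta(P)$.} If $\mu(\theta)=c$ with $c$ full, then $r_c^\theta\ge\inf\{r_c^{\theta'}:\theta'\in\mu(c)\}=P_c$. If $c$ is not full, then $P_c=0$ and the inclusion is trivial. An unmatched $\theta$ is handled by the acceptability convention, since stability forces every college to be full or every student matched.
\item \emph{Nothing in $B^\theta(P)$ is strictly preferred to $\mu(\theta)$.} Take $c\in B^\theta(P)$ with $c\succ^{\theta|\mu}\mu(\theta)$. If $c$ is not full, $(\theta,c)$ blocks directly. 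If $c$ is full and $r_c^\theta>P_c$, then by the definition of the infimum some $\theta'\in\mu(c)$ has $r_c^{\theta'}<r_c^\theta$, so again $(\theta,c)$ blocks.
\end{enumerate}

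The remaining case in the second claim is $r_c^\theta=P_c$ exactly, and I expect it to be the main obstacle. By Assumption \ref{ass:(Strict-preferences)}, this set of students is $\eta$-null for each of the finitely many colleges. So $D^\theta(P,s)=\mu(\theta)$ holds off a null set, which suffices for the demand masses and the statistics. To obtain literal pointwise equality, I would invoke the right-continuity convention on matchings, as in \citet{azevedo2016supply}: the set of $\theta$ with $\mu(\theta)\prec^{\theta|\mu}c$ is open, so an exceptional boundary student would force a positive-measure neighborhood of students, with priority just above $P_c$, who also prefer $c$. That contradicts the previous paragraph. If this topological step proves delicate, I would state the equivalence up to $\eta$-null sets, which is all the definitions require.
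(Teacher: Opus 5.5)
Your proof is correct and follows essentially the paper's argument: the first direction is identical, and in the converse the paper uses the same blocking-pair case split to show $\mu(\theta)=D^{\theta}(P,s)$. The paper handles the full-college case, including the boundary $r_{c}^{\theta}=P_{c}$, by right-continuity to get pointwise equality, as in your fallback; your null-set route via Assumption \ref{ass:(Strict-preferences)} already suffices for rational expectations and market clearing, and in the right-continuity step you need only a single type $\theta^{\circ}$ in the open set with $r_{c}^{\theta^{\circ}}>P_{c}$ (blocking is pointwise), not a positive-measure neighborhood.
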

Lemma \ref{lem:(cutoff<->match)} provides a tractable characterization of stable matchings. It allows the calculation of a stable matching as a solution of a system of equations. The proof of Theorem \ref{thm:(stable exists)} in the following section builds upon this tractable representation.

\section{Existence and Structure of Stable Matchings\label{sec:Existence-and-Structure}}

In this section, we show that a stable matching always exists in economies
that satisfy one additional assumption. To motivate this assumption,
consider the following example. 
\begin{example}
\label{ex:Economy-without-stable-matching}There are two colleges
$\mathcal{C}=\left\{ c_{1},c_{2}\right\} $, each with capacity $q_{1}=q_{2}=1$.
All students have identical attributes and preferences and differ only by their priorities. Formally, the set of student types is $\Theta=\mathcal{X}\times\Gamma\times\left[0,1\right]^{2}=\left[0,1\right]\times\left\{ \bar{\gamma}\right\} \times\left[0,1\right]^{2}$.
The measure $\eta$ is uniform over $\left\{ \bar{\chi}\right\} \times\left\{ \bar{\gamma}\right\} \times\left[0,1\right]^{2}\subset\Theta$.
Students have a single attribute $\bar{\chi}=1$, and the matching statistics correspond to the mass of students assigned to each college, namely $s=\bigl(\eta\left(\mu\left(c_{1}\right)\right),\eta\left(\mu\left(c_{2}\right)\right)\bigr)$.
All students strictly prefer college $c_{1}$ over college $c_{2}$ if and only if the mass of students assigned to college $c_{1}$ is strictly below $1/2$, namely, the utility of any student $\theta$ is
\begin{align*}
u^{\theta}\left(c_{1};\mu\right) & =\begin{cases}
0 & \mbox{if }s_{c_{1}}\geq1/2\\
2 & \mbox{if }s_{c_{1}}<1/2\,
\end{cases}\\
u^{\theta}\left(c_{2};\mu\right) & =1\,.
\end{align*}
\end{example}

A stable matching does not exist in the economy of Example \ref{ex:Economy-without-stable-matching}, even though students are non-atomic. 
Because all agents have identical preferences and colleges have sufficient capacity, in any stable matching, all students must be matched to the same preferred college. Student preferences are discontinuous, and students always strictly prefer one of the colleges. But the collective mass of students behaves as a single discrete student with ``Groucho Marx preferences'' -- they would like to match to $c_{1}$ if and only if they are not matched to $c_{1}$.\footnote{\emph{\textquotedbl Please accept my resignation. I don't want to belong to any club that will accept people like me as a member\textquotedbl} - Groucho Marx in a telegram to the Friar's Club of Beverly Hills to which he belonged (\citealt{marx1959groucho}, p. 321).}

As discussed in the literature review, several papers rule out such economies by requiring that students' preferences are continuous. If the utility function in Example \ref{ex:Economy-without-stable-matching} was continuous there must have been some $s_{c_{1}}$ that makes all students indifferent between both colleges, and there would have existed a stable matching in which indifferent students are split between the colleges in precisely the correct ratio to make all students indifferent between the two colleges. This stable matching may be unsatisfactory because it crucially relies on all students being indifferent between the two colleges, and the implementation of this matching requires a precise split of these indifferent students between the two colleges.

We take a different approach. Rather than imposing restrictions on any individual student's preferences, we impose an aggregate condition on the distribution of preferences. 
In particular, our argument does not rely on the existence of students who are indifferent between colleges.\footnote{Assumption \ref{ass:(Continuity with Stats)} is related in spirit to several regularity conditions in the literature, but differs in what it restricts. 
The large-market regularity condition of \citet{liu2016ordinal} requires the assignment mechanism to produce similar allocations when the reports of a small fraction of participants change. 
By contrast, Assumption \ref{ass:(Continuity with Stats)} restricts agents' preferences rather than the sensitivity of the mechanism's outcome; in particular, it allows a stable matching to change substantially following changes in the reports of even a small number of agents. 
The assumptions of \citet{fisher2016matching} require that changing the assignment of a single agent not alter any other agent's ordinal ranking. By contrast, Assumption \ref{ass:(Continuity with Stats)} permits the ordinal preferences of a small mass of agents to change arbitrarily.}

\begin{assumption}[Diversity of preferences]
\label{ass:(Continuity with Stats)}For any matching statistics $s$,
and any $\varepsilon>0$ there exists $\delta>0$ such that $\eta\left(\left\{ \theta\,\mid\,\succ^{\theta|s}\neq\succ^{\theta|s'}\right\} \right)<\varepsilon$
for any $s'$ such that $\left\Vert s-s'\right\Vert _{\infty}<\delta$.%
\end{assumption}
In other words, diversity of preferences is satisfied if a sufficiently small change to the matching statistics causes only a small fraction of students to change their preference orderings over colleges. Diversity of preferences is trivially satisfied if student preferences do not depend on the matching statistics. In Section \ref{sec:App-econ-w-taste-shocks}, we show that the assumption is satisfied by standard empirical models that incorporate individual taste shocks.  

As an illustration, we adapt the economy from Example \ref{ex:Economy-without-stable-matching} to satisfy the diversity of preferences. 
\begin{example}
\label{ex:Economy-diverse-pref}We change the economy of Example \ref{ex:Economy-without-stable-matching}
by setting $\Gamma=\left[0,1\right]$, and taking the measure $\eta$
to be uniform\footnote{This measure corresponds to no correlation between a student's preferences
and priorities.} over $\left\{ \bar{\chi}\right\} \times\left[0,1\right]\times\left[0,1\right]^{2}\subset\Theta$.
The utility of $\theta=\left(\bar{\chi},\gamma^{\theta},r^{\theta}\right)$
is 
\begin{align*}
u^{\theta}\left(c_{1};\mu\right) & =\begin{cases}
0 & \mbox{if }s_{c_{1}}\geq\gamma^{\theta}\\
2 & \mbox{if }s_{c_{1}}<\gamma^{\theta}
\end{cases}\\
u^{\theta}\left(c_{2};\mu\right) & =1\,.
\end{align*}
 
\end{example}

The economy in Example \ref{ex:Economy-diverse-pref} satisfies diversity of preferences, as a small change to the matching statistics $s=\sigma\left(\mu\right)$ can cause a student $\theta$ to change his preference ordering only if $\gamma^{\theta}$ is sufficiently close to $s_{c_{1}}$. Although all students have discontinuous utility functions, because different students have different preferences, there exists an assignment that can satisfy all students. The economy has a unique stable matching $\mu$ given by
\[
\mu\left(\theta\right)=\begin{cases}
c_{1} & \mbox{if }\gamma^{\theta}\ge 1/2\\
c_{2} & \mbox{if }\gamma^{\theta}<1/2\,.
\end{cases}
\]
 The following theorem shows that the existence of a stable matching is not a coincidence and that a stable matching exists in any economy satisfying diversity of preferences.

\begin{thm}
\label{thm:(stable exists)}A stable matching exists in any economy
$E=[\Theta,C,\eta,q]$ that satisfies diversity of preferences. 
\end{thm}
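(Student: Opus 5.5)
By Lemma \ref{lem:(cutoff<->match)}, it suffices to find a pair $\left(P,s\right)$ that is rational expectations and market clearing. I will obtain such a pair as a fixed point of a continuous map on the compact convex set $[0,1]^{\mathcal{C}}\times\mathcal{X}^{\mathcal{C}}$ and apply Brouwer's theorem. This follows the structure of the existence proof in \citet{azevedo2016supply}, augmented by a statistics coordinate.

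\textbf{Step 1: continuity of demand and of induced statistics.} For fixed $s$, the map $P\mapsto D\left(P,s\right)$ is continuous, as in \citet{azevedo2016supply}. Changing $P$ by $\delta$ changes budget sets only for students whose priority at some college lies within $\delta$ of that college's cutoff. Since priorities are percentiles and Assumption \ref{ass:(Strict-preferences)} rules out atoms, this set has mass at most $J\delta$.

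Now define the induced statistics $S\left(P,s\right)=\sigma\left(\mu_{P,s}\right)$, where $\mu_{P,s}\left(\theta\right)=D^{\theta}\left(P,s\right)$. Moving from $\left(P,s\right)$ to $\left(P',s'\right)$ changes a student's demand only if one of two things happens:
\begin{itemize}
\item the student's budget set changes, which affects mass at most $J\left\Vert P-P'\right\Vert_\infty$; or
\item the student's ordering changes, i.e.\ $\succ^{\theta|s}\neq\succ^{\theta|s'}$, which by diversity of preferences (Assumption \ref{ass:(Continuity with Stats)}) affects mass below $\varepsilon$ once $\left\Vert s-s'\right\Vert_\infty<\delta$.
\end{itemize}
So the mass of students whose demand changes is small. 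Since $\mathcal{X}$ is bounded, $\left\Vert\sigma\left(\mu\right)-\sigma\left(\mu'\right)\right\Vert\le M\,\eta\left(\mu\ne\mu'\right)$. Hence both $D$ and $S$ are jointly continuous in $\left(P,s\right)$. This is the key place the assumption enters, and Example \ref{ex:Economy-without-stable-matching} shows it cannot be dropped.

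\textbf{Step 2: the fixed-point map.} Define $T\left(P,s\right)=\left(T_{P}\left(P,s\right),S\left(P,s\right)\right)$. For the cutoff component I use an excess-demand adjustment:
\[
T_{P,c}\left(P,s\right)=\min\left\{ 1,\max\left\{ 0,P_{c}+D_{c}\left(P,s\right)-q_{c}\right\} \right\}.
\]
The map $T$ is continuous by Step 1. It sends $[0,1]^{\mathcal{C}}\times\mathcal{X}^{\mathcal{C}}$ into itself, because $S\left(P,s\right)\in\mathcal{X}^{\mathcal{C}}$ by the footnote on convexity of $\mathcal{X}$ and $0\in\mathcal{X}$. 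Brouwer therefore gives a fixed point $\left(P^{*},s^{*}\right)$. Rational expectations holds immediately, since $s^{*}=S\left(P^{*},s^{*}\right)$.

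\textbf{Step 3: market clearing at the fixed point.} Fixing $s^*$, the fixed-point equation for $P^*$ gives three cases for each college $c$:
\begin{itemize}
\item If $0<P^{*}_{c}<1$, then $D_{c}=q_{c}$.
\item If $P^{*}_{c}=0$, then $D_{c}\le q_{c}$, which is exactly what market clearing requires.
\item If $P^{*}_{c}=1$, then $D_{c}\ge q_{c}$. But at $P^{*}_{c}=1$ only a measure-zero set of students has $r_{c}^{\theta}\ge1$, so $D_{c}=0<q_{c}$. This is a contradiction, so this case cannot arise.
\end{itemize}
Thus $D\left(P^{*},s^{*}\right)\le q$, and $P^{*}_{c}=0$ whenever $D_{c}<q_{c}$. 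Applying Lemma \ref{lem:(cutoff<->match)} then yields a stable matching.

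I expect the main obstacle to be Step 1, specifically checking that diversity of preferences, which is stated pointwise in $s$, delivers continuity at every point and not merely some weaker property. Compactness is not needed, because Brouwer requires only continuity of $T$, and pointwise continuity follows directly from the $\varepsilon$–$\delta$ statement of Assumption \ref{ass:(Continuity with Stats)}.

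A minor technicality remains: the demanded matching must be measurable and right-continuous so that Lemma \ref{lem:(cutoff<->match)} applies. I would handle this as in \citet{azevedo2016supply}, modifying the matching on a null set, which does not affect statistics or demand.
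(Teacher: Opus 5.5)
Your proof is correct and follows essentially the same route as the paper: reduce to a rational-expectations, market-clearing pair via Lemma \ref{lem:(cutoff<->match)}, bound the mass of students whose demand changes by $J\Vert P-P'\Vert_\infty$ plus the diversity-of-preferences term to get joint continuity of $D$ and of the induced statistics, and apply Brouwer on $[0,1]^{\mathcal{C}}\times\mathcal{X}^{\mathcal{C}}$. The only difference is the cutoff-adjustment map. You use a clipped excess-demand map and rule out $P^*_c=1$ because the top percentile is $\eta$-null, whereas the paper uses $P_c/(1+q_c-D_c)$ when $D_c\le q_c$ and $P_c+D_c-q_c$ when $D_c\ge q_c$, which stays in $[0,1]$ automatically because $D_c\le 1-P_c$.
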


The proof of Theorem \ref{thm:(stable exists)} relies on the cutoff characterization of stable matching. Lemma \ref{lem:(cutoff<->match)} allows us to transform the problem of finding a stable matching into the simpler finite-dimensional problem of finding $\left(P,s\right)$ that are market-clearing and rational expectations. After this transformation, we can use a fixed-point argument to show that there exists a market-clearing and rational expectations $\left(P,s\right)$. The diversity of preferences assumption ensures that the demand $D\left(P,s\right)$ is continuous in $P$ and $s$.

\subsection{The structure of the set of stable matchings}

In standard economies, the set of stable matchings forms a lattice, with a corresponding lattice of market-clearing cutoffs. In particular, there exists a stable matching that is preferred by all students, in that it gives each student higher utility than any other stable matching. Moreover, generically, there is a unique stable matching \citep{azevedo2016supply}. Previous work shows that these properties need not extend to economies with peer-dependent preferences \citep{sasaki1996two,pycia2019matching}. The following example demonstrates that, even in a continuum economy, a simple and natural peer externality can cause all of these properties to fail.

\begin{example}
\label{exa:multiple-matchings}The economy has two colleges $\mathcal{C}=\left\{ c_{1},c_{2}\right\} $,
each with capacity $q_{c_{1}}=q_{c_{2}}=1/2$. Each student type $\theta=\left(\chi^{\theta},\gamma^{\theta},r^{\theta}\right)$
has a single attribute $\chi^{\theta}\in\left[0,1\right]=\mathcal{X}$
which represents the student's GPA (normalized to be between 0 and
1). The matching statistics $s=\left(s_{c_{1}},s_{c_{2}}\right)$
correspond to the sum of student GPAs at each college. Student utility
is

\[
u^{\theta}\left(c;s\right)=g_{c}^{\theta}+\beta^{\theta}\cdot s_{c}\,,
\]
which is parameterized by $\gamma^{\theta}=\left(g_{c_{1}}^{\theta},g_{c_{2}}^{\theta},\beta^{\theta}\right)$.
That is, the student gains utility $g_{c}^{\theta}$ from attending
college $c$ irrespective of peers, and gains utility $\beta^{\theta}\cdot s_{c}$
from the quality of his peers. The measure $\eta$ is uniform over
\[ 
\left\{ \left(\chi^{\theta},\gamma^{\theta},r^{\theta}\right)\,\middle|\,
\chi^{\theta}=\beta^{\theta}/16=r_{c_{1}}^{\theta}=r_{c_{2}}^{\theta},\;
 g_{c_{1}}^{\theta}=1,\;g_{c_{2}}^{\theta}=2\right\}\subset\Theta.
\]
Thus, student GPA is uniformly distributed, $\chi^{\theta}\sim U\left[0,1\right]$,
and the student type $\theta=\left(\chi^{\theta},\gamma^{\theta},r^{\theta}\right)$
is fully determined by the student's GPA. In particular, the student's
rank at both colleges equals the student's GPA,
$r_{c_{1}}^{\theta}=r_{c_{2}}^{\theta}=\chi^{\theta}$, and the preference
parameters are $\beta^{\theta}=16\chi^{\theta}$,
$g_{c_{1}}^{\theta}=1$, and $g_{c_{2}}^{\theta}=2$.
\end{example}
The economy of Example \ref{exa:multiple-matchings} features students who prefer colleges that are attended by peers with higher GPAs.\footnote{Note that student utilities specify the student's preferences over colleges, which may differ from the student's educational gains at each college (e.g., \citet{abdulkadirouglu2020parents}).} Students with a GPA higher than $1/2$ are assigned to their most preferred college under any stable matching, because such students have sufficiently high priority at both colleges. Although $g_{c_{2}}^{\theta}>g_{c_{1}}^{\theta}$, a high GPA student may prefer college $c_{1}$ if sufficiently many high GPA students are assigned to college $c_{1}$. As a result, these high-GPA students face a coordination problem.

The economy of Example \ref{exa:multiple-matchings} has three stable
matchings 
\[
\begin{array}{ccc}
\mu^{\dagger}\left(\theta\right)=\begin{cases}
c_{1} & \chi^{\theta}\geq1/2\\
c_{2} & \chi^{\theta}<1/2
\end{cases}, & \mu^{\vartriangle}\left(\theta\right)=\begin{cases}
c_{2} & a^{\vartriangle}-\frac{1}{2}\leq\chi^{\theta}<a^{\vartriangle}\\
c_{1} & \text{otherwise}
\end{cases}, & \mu^{\circ}\left(\theta\right)=\begin{cases}
c_{2} & \chi^{\theta}\geq1/2\\
c_{1} & \chi^{\theta}<1/2
\end{cases}\end{array}
\]
where $a^{\vartriangle}=\left(3+\sqrt{5}\right)/8\approx0.65$. In
matching $\mu^{\dagger}$ high GPA students coordinate on college
$c_{1}$. In matching $\mu^{\circ}$ high GPA students coordinate
on college $c_{2}$. In matching $\mu^{\vartriangle}$ the GPAs at
the two colleges are less differentiated. Students with $\chi^{\theta}>a^{\vartriangle}$
prefer college $c_{1}$, students with $\chi^{\theta}<a^{\vartriangle}$
prefer college $c_{2}$, and the type $\chi^{\theta}=a^{\vartriangle}$
is indifferent. The corresponding
cutoffs and statistics are $P^{\dagger}=\left(\frac{1}{2},0\right),s^{\dagger}=\left(\frac{3}{8},\frac{1}{8}\right)$,
$P^{\vartriangle}=\left(0,a^{\vartriangle}-\frac{1}{2}\right),s^{\vartriangle}=\left(\frac{5}{8}-\frac{1}{2}a^{\vartriangle},\frac{1}{2}a^{\vartriangle}-\frac{1}{8}\right)\approx\left(0.3,0.2\right)$,
and $P^{\circ}=\left(0,\frac{1}{2}\right),s^{\circ}=\left(\frac{1}{8},\frac{3}{8}\right)$.
Appendix \ref{sec:Omitted-Calculations} provides the calculations.

In contrast to standard matching markets, there is no stable matching
that is preferred by all students. Students with high GPA,
$\chi^{\theta}\in\left[1/2,1\right]$, prefer $\mu^{\circ}$ over
$\mu^{\dagger}$ over $\mu^{\vartriangle}$, namely,
\[
u^{\theta}\left(\mu^{\circ}\left(\theta\right);s^{\circ}\right)
>u^{\theta}\left(\mu^{\dagger}\left(\theta\right);s^{\dagger}\right)
>u^{\theta}\left(\mu^{\vartriangle}\left(\theta\right);s^{\vartriangle}\right).
\]
Students with intermediate GPA,
$\chi^{\theta}\in\left[a^{\vartriangle}-\frac{1}{2},\frac{1}{2}\right)$,
have the reverse ranking, namely,
\[
u^{\theta}\left(\mu^{\circ}\left(\theta\right);s^{\circ}\right)
<u^{\theta}\left(\mu^{\dagger}\left(\theta\right);s^{\dagger}\right)
<u^{\theta}\left(\mu^{\vartriangle}\left(\theta\right);s^{\vartriangle}\right).
\]
Thus, high-GPA students most
prefer $\mu^{\circ}$, in which they attend college $c_{2}$ and benefit
from both its higher intrinsic value and stronger peers, whereas
intermediate-GPA students most prefer $\mu^{\vartriangle}$, in which
they attend college $c_{2}$ with a mix of higher- and lower-GPA students. 

In addition, the cutoffs do not form a lattice. For example, there is no market-clearing rational expectation pair $\left(P,s\right)$ with $P=\left(\frac{1}{2},\frac{1}{2}\right)$ or $P=\left(0,0\right)$.
 
Example \ref{exa:multiple-matchings} also illustrates that our pairwise-stability notion is not equivalent to the core; that is, a coalition may benefit by forming alternative matches. To see this, observe that given $\mu^{\dagger}$ students with high GPA ($\chi^{\theta}\in\left[1/2,1\right]$) and college $c_{2}$ can benefit by collectively deviating and matching to each other.

Example \ref{exa:multiple-matchings} also shows that a highly demanded college may not necessarily have the best intrinsic qualities. In matching $\mu^{\dagger}$, college $c_{1}$ has a higher cutoff and is preferred by all students with $\chi^{\theta}>1/4$, despite all students agreeing that $g_{c_{2}}^{\theta}>g_{c_{1}}^{\theta}$. If matching $\mu^{\dagger}$ is realized, analysis of elicited preferences that does not account for the possibility of peer-dependent preferences can erroneously indicate that students have strong preference for $c_{1}$ because of its intrinsic quality.\footnote{Holding the matching statistics $s$ fixed, the true model with $u^{\theta}\left(c;s\right)=g_{c}^{\theta}+\beta^{\theta}\cdot s_{c}$ is indistinguishable from a model in which students have preferences $u^{\theta}\left(c;s\right)=\tilde{g}_{c}^{\theta}$ that only depend on a college intrinsic qualities $\tilde{g}_{c}^{\theta}$ with $\tilde{g}_{c}^{\theta}=g_{c}^{\theta}+\beta^{\theta}\cdot s_{c}$.} 

The distinction between peer-dependent preferences and preferences for intrinsic school quality can have important policy implications. For example, consider a policy that transfers all students from college $c_{1}$ to a new college. This policy is harmful if students have a strong preference for college $c_{1}$ because of its intrinsic quality. But the policy can lead to a Pareto improvement if students have a preference for college $c_{1}$ only due to peer effects, as in Example \ref{exa:multiple-matchings}. 

In standard matching markets, such multiplicity of stable matching is a knife-edge case \citep{azevedo2016supply}. In contrast, the multiplicity of stable matchings holds in many variants of the economy of Example \ref{exa:multiple-matchings} and does not require precisely tuned parameters.

As in standard matching markets, it is possible for a stable matching to be Pareto dominated for students by another stable matching, in the sense that all students strictly prefer one stable matching over another. Appendix \ref{sec:Additional-Examples} provides such an example. 

Finally, the rural hospital theorem for standard matching markets states that a student who is unassigned under some stable matching must be unassigned under all stable matchings. Appendix \ref{sec:Additional-Examples} provides an example showing that the rural hospital theorem fails to hold in matching markets with peer-dependent preferences. An additional example in Appendix \ref{sec:Additional-Examples} shows it is possible to have different stable matchings that have the same cutoffs. 

\section{Approximate Stability in Finite Sampled Economies\label{sec:Approximately-Stability}}

A continuum economy $E=[\Theta,\mathcal{C},\eta,q]$ can be interpreted
as describing the population of students from which finite economies
are sampled. We show that, under mild regularity assumptions, sampled economies admit an approximately stable matching. 

Formally, a discrete economy with peer-dependent preferences can be
captured by $F=[\Theta,\mathcal{C},\hat{\eta},\hat{q}]$ where $\hat{q}\in\mathbb{N}^{\mathcal{C}}$
and the measure $\hat{\eta}$ is a discrete measure with distinct
atoms of equal measure, each atom corresponding to a student. A sampled
economy $F^{n}=[\Theta,\mathcal{C},\hat{\eta}^{n},\hat{q}^{n}]$ from
continuum economy $E=[\Theta,\mathcal{C},\eta,q]$ is generated by
discretizing capacities $\hat{q}^{n}=\left\lfloor n\cdot q\right\rfloor /n$
and independently sampling $n$ students $\theta_{1},\dots,\theta_{n}$
from $\eta$. That is, $\hat{\eta}$ is a discrete measure with $\hat{\eta}\left(A\right)=\frac{1}{n}\cdot\left|A\cap\left\{ \theta_{1},\dots,\theta_{n}\right\} \right|$
for any $A\subset\Theta$. The matching statistics $\sigma\left(\hat{\mu}^{n}\right)$
given a finite matching $\hat{\mu}^{n}:\Theta\to\mathcal{C}\cup\left\{ \phi\right\} $
are given by 
\[
\sigma_{c}\left(\hat{\mu}^{n}\right)=\frac{1}{n}\sum_{\theta_{i}\in\mu\left(c\right)}\chi^{_{\theta_{i}}},
\]
which is the finite analogue of equation (\ref{eq:match-stats}).

Stable matchings may fail to exist in finite economies. We define a matching to be approximately stable if any student who can block the match can only gain a small amount of utility from blocking. If the student's utility gain from blocking is small, small frictions may suffice to prevent the formation of blocking pairs and prevent unraveling.
\begin{defn}
\label{def:approx-stable}A matching $\mu$ with statistics $s=\sigma\left(\mu\right)$
is $\varepsilon$-stable if $\eta\left(\mu\left(c\right)\right)\leq q_{c}$
for all $c$, and for any student-college pair $\left(\theta,c\right)$
that blocks matching $\mu$ the student's gain from blocking is $u^{\theta}\left(c;s\right)-u^{\theta}\left(\mu\left(\theta\right);s\right)<\varepsilon$. 
\end{defn}
Our analysis so far imposed no restriction on students' cardinal utility $u^{\theta}\left(c;s\right)$. Moreover, stability depends only on ordinal preferences $\succ^{\theta|s}$, which are invariant to monotone transformations of the utility function. Therefore, obtaining meaningful results about the magnitude of a student's utility gain from blocking requires an additional assumption.
\begin{defn}
\label{assu:pref-cont} Utilities are \emph{$\alpha$ locally peer-smooth
at matching statistics $s$} if for any student $\theta\in\Theta$,
college $c$ and matching statistics $s'$ it holds that $\left|u^{\theta}\left(c;s\right)-u^{\theta}\left(c;s'\right)\right|\leq\alpha\left\Vert s-s'\right\Vert _{\infty}$.
Utilities are $\alpha$ peer-smooth if they are $\alpha$ locally
peer-smooth at any matching statistics $s$.
\end{defn}
Section \ref{sec:App-econ-w-taste-shocks} shows that utility functions used in standard empirical models are $\alpha$ peer-smooth.

\begin{thm}
\label{thm:aprox-stable-for-sampled}Let $E=[\Theta,\mathcal{C},\eta,q]$
be a continuum economy, and let $\left(P^{*},s^{*}\right)$ be rational
expectations and market clearing such that $\partial_{P}D\left(P^{*},s^{*}\right)$
exists and is invertible, and utilities are $\alpha$ peer-smooth
at $s^{*}$ for some $\alpha>0$. \\ 
Let $F^{n}=[\Theta,\mathcal{C},\eta^{n},q^{n}]$ be a sequence of finite economies sampled from $E$. Then for any $\varepsilon>0$ there exists $N$ such that for any $n>N$, with probability of at least $1-\varepsilon$, the sampled economy $F^{n}$ has an $\varepsilon$-stable matching. Moreover, there is a $\varepsilon$-stable matching of $F^{n}$ that corresponds to $\left(P,s\right)$ such that 
$\|P-P^{*}\|_{\infty} < \varepsilon$ and 
$\|s-s^{*}\|_{\infty} < \frac{\varepsilon}{2\alpha}$. 
\end{thm}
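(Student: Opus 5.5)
We fix the statistics near $s^{*}$ and only adjust cutoffs. The plan is to construct, in the sampled economy $F^{n}$, a matching $\hat{\mu}^{n}(\theta)=D^{\theta}(P^{n},s^{*})$ in which students choose as if the statistics were exactly $s^{*}$. The cutoffs $P^{n}$ clear the finite market exactly, in the sense of Definition \ref{def:RE-MC}(i) for $\hat{\eta}^{n}$ and $\hat{q}^{n}$. The only violation of stability then comes from the mismatch between the conjectured $s^{*}$ and the realized $\hat{s}^{n}=\sigma(\hat{\mu}^{n}\mid\hat{\eta}^{n})$, and peer-smoothness converts this mismatch into a utility bound.

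The steps are as follows.

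\emph{Step 1 (existence of clearing cutoffs).} The invertibility of $\partial_{P}D(P^{*},s^{*})$ makes $P\mapsto D(P,s^{*}\mid\eta)$ a local homeomorphism near $P^{*}$, onto a neighborhood of $D(P^{*},s^{*})$. We combine this with a uniform law of large numbers. The class of sets $\{\theta: D^{\theta}(P,s^{*})=c\}$, with $s^{*}$ fixed and $P$ varying, is a finite Boolean combination of half-spaces $\{r_{c}^{\theta}\ge P_{c}\}$ intersected with the fixed preference cells $\{\succ^{\theta|s^{*}}=\succ\}$. It therefore has finite VC dimension, so $\sup_{P}\|D(P,s^{*}\mid\hat{\eta}^{n})-D(P,s^{*}\mid\eta)\|_{\infty}\to0$ in probability by Glivenko--Cantelli. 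A degree or Brouwer-type argument on a small ball around $P^{*}$ then gives, with probability at least $1-\varepsilon$, cutoffs $P^{n}$ with $\|P^{n}-P^{*}\|<\varepsilon$ satisfying finite market clearing, i.e. $D(P^{n},s^{*}\mid\hat{\eta}^{n})\le\hat{q}^{n}$ with equality up to one student where $P_{c}>0$. Finite demand is a step function, so this needs care. I would instead take $P^{n}_{c}$ equal to the score of the $\hat{q}^{n}_{c}$-th applicant in a deferred-acceptance run with fixed preferences $\succ^{\theta|s^{*}}$. This is a standard matching with fixed preferences, so it is stable for those preferences and has cutoffs. Invertibility, together with the uniqueness of nearby continuum clearing cutoffs, pins these DA cutoffs near $P^{*}$. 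Where there are zero-cutoff colleges, the local uniqueness argument needs to handle them as well.

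\emph{Step 2 (statistics are close).} Since $\hat{\mu}^{n}$ is defined by demand at $(P^{n},s^{*})$, we have $\hat{s}^{n}_{c}=\int 1\{D^{\theta}(P^{n},s^{*})=c\}\chi^{\theta}d\hat{\eta}^{n}$. The same uniform LLN, which applies since $\chi$ is bounded, compares this to $\int 1\{D^{\theta}(P^{n},s^{*})=c\}\chi^{\theta}d\eta$. That in turn is close to $\sigma_{c}$ at $(P^{*},s^{*})$, which equals $s^{*}_{c}$ by rational expectations. The last comparison uses the $M$-Lipschitz property of $\sigma$ in the mass of reassigned students, together with continuity of $D(\cdot,s^{*})$ at $P^{*}$, which follows from differentiability. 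Choosing $N$ large gives $\|\hat{s}^{n}-s^{*}\|_{\infty}<\varepsilon/(2\alpha)$.

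\emph{Step 3 (approximate stability).} Suppose $(\theta,c)$ blocks $\hat{\mu}^{n}$ with $c\succ^{\theta|\hat{s}^{n}}\hat{\mu}^{n}(\theta)$. Because $\hat{\mu}^{n}$ is stable under the fixed preferences $\succ^{\theta|s^{*}}$, we have $u^{\theta}(c;s^{*})\le u^{\theta}(\hat{\mu}^{n}(\theta);s^{*})$. Peer-smoothness at $s^{*}$ bounds each utility's change by $\alpha\|\hat{s}^{n}-s^{*}\|<\varepsilon/2$. The blocking gain is therefore less than $\varepsilon$. Capacities hold by construction. Taking $(P,s)=(P^{n},\hat{s}^{n})$ yields the stated proximity.

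The main obstacle is Step 1. It requires showing that the finite DA cutoffs under the frozen preferences converge to $P^{*}$ rather than to some other clearing point. I would handle it as follows. First, show that any limit of the finite cutoffs clears the continuum market at $s^{*}$. Then use the inverse function theorem for local uniqueness, together with the result of \citet{azevedo2016supply} that cutoffs of sampled economies converge when the continuum clearing cutoff is unique. For the fixed-$s^{*}$ economy, the possibly nonunique global clearing set may force a localized version: restrict the DA to a neighborhood via a fixed-point map on a small ball around $P^{*}$.
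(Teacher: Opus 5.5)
Your architecture is the same as the paper's. You freeze $s^{*}$ to get a standard economy, take a stable matching of the frozen finite economy with cutoffs $P^{n}$, set $\hat{\mu}^{n}(\theta)=D^{\theta}(P^{n},s^{*})$, bound $\|\hat{s}^{n}-s^{*}\|_{\infty}$, and convert the mismatch via peer-smoothness. Steps 2 and 3 are sound. Your uniform law of large numbers over the VC class weighted by the bounded $\chi$ is a valid substitute for the paper's route, which applies the law of large numbers at the fixed sets $\mu^{*}(c)$ and bounds the sampled mass within $\varepsilon_{1}$ of each $P^{*}_{c}$. One small imprecision: the mass of reassigned students is controlled by $\|P^{n}-P^{*}\|_{1}$ through budget sets, not by continuity of aggregate demand.

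The gap is Step 1, exactly where you flag it. Nothing you propose shows that the finite cutoffs land near $P^{*}$ rather than near some other market-clearing cutoff of the frozen continuum economy. Local uniqueness is not enough, because the Azevedo--Leshno convergence result needs the clearing cutoff to be globally unique. The ``localized DA'' has no mechanism behind it: finite demand is a step function, so Brouwer fails, and deferred acceptance returns the student-optimal cutoffs wherever they lie. The inverse function theorem would also require $D$ to be $C^{1}$ near $P^{*}$, which is not assumed.

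The missing idea is that, in a standard fixed-preference continuum economy, invertibility of $\partial_{P}D(P^{*},s^{*})$ already forces global uniqueness. Suppose $P^{\dagger}\neq P^{*}$ also cleared the market. By the lattice structure, $\max\{P^{*},P^{\dagger}\}$ and $\min\{P^{*},P^{\dagger}\}$ also clear, and one of them is comparable to and distinct from $P^{*}$. By the rural hospital theorem, demand is constant on the segment joining them, so $\partial_{P}D(P^{*},s^{*})$ is singular in that direction, a contradiction. This is the paper's Claim 1.

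Continuity and compactness then give a margin $m_{1}>0$ for the excess-demand function $\mathcal{Z}$ outside $\mathcal{B}_{\varepsilon_{1}}(P^{*})$. The VC bound together with $|q^{n}_{c}-q_{c}|\le 1/n$ then rules out any clearing cutoff of $\tilde{F}^{n}$ outside that ball (Claim 3). The zero-cutoff issue you leave open is handled in the paper by working in $\mathcal{N}(P^{*})$. A vacancy-chain argument there shows that colleges undersubscribed at $P^{*}$ remain unfilled in $\tilde{F}^{n}$.
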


The proof of Theorem \ref{thm:aprox-stable-for-sampled} leverages the convergence of standard economies\footnote{The proof extends the techniques used by \citet{azevedo2016supply}, as it does not directly follow from previous results.} to find cutoffs that are close to $P^{*}$ and matching statistics that are close to $s^{*}$ that induce an $\varepsilon$-stable matching.
It may be necessary to adjust both the cutoffs and the matching statistics, because in an $\varepsilon$-stable matching, no college can be assigned more students than its capacity.

Theorem \ref{thm:aprox-stable-for-sampled} implies that an $\varepsilon$-stable
matching exists with high probability in a discrete economy sampled
from the economy in Example \ref{exa:multiple-matchings}. Moreover, for any of the three stable matchings in that economy, there exists an $\varepsilon$-stable matching with approximately the same cutoffs
and statistics.
\begin{rem*}
The requirement that utilities are $\alpha$ locally peer-smooth implies
local continuity of preferences. A weaker condition that does not require continuity of preferences can be used to show the existence of
an $\varepsilon$-stable matching for a fixed $\varepsilon$. With
minor changes to the proof of Theorem \ref{thm:aprox-stable-for-sampled}
it can be shown that if there exists $\varepsilon_{u},\delta>0$ such
that the (potentially discontinuous) utilities satisfy $\left|u^{\theta}\left(c;s\right)-u^{\theta}\left(c;s'\right)\right|\leq\varepsilon_{u}$
when $\left\Vert s-s'\right\Vert _{\infty}<\delta$, then for any
$\varepsilon_{p}>0$ there exists $N$ such for any $n>N$ the sampled
economy $F^{n}$ has an $\varepsilon_{u}$-stable matching with probability
of at least $1-\varepsilon_{p}$.
\end{rem*}

\section{ Existences of Stable Matching in Common Empirical Specifications with Peer Effects \label{sec:App-econ-w-taste-shocks}}

This section shows that a class of random-utility specifications commonly used in empirical analyses of school choice fits naturally within our framework. These specifications allow students to value both intrinsic school characteristics and their peers' composition. We show that continuously distributed idiosyncratic taste shocks imply that diversity of preferences is satisfied, and therefore guarantee the existence of a stable matching in the continuum economy. Under the additional local regularity condition imposed in Theorem \ref{thm:aprox-stable-for-sampled}, sufficiently large finite economies sampled from the continuum economy also admit approximately stable matchings with high probability.

A student of type
\[
\theta=\left(\chi^{\theta},\gamma^{\theta},r^{\theta}\right)
\]
has utility
\[
u^{\theta}\left(c;s\right)
=
g_{c}^{\theta}
+
\beta^{\theta}\cdot s_{c}
+
\xi_{c}^{\theta},
\]
where
\[
\gamma^\theta
=
\left(g^{\theta},\beta^{\theta},\xi^{\theta}\right)
=
\left(
g_{c_{1}}^{\theta},\ldots,g_{c_{J}}^{\theta},
\beta_{1}^{\theta},\ldots,\beta_{L}^{\theta},
\xi_{c_{1}}^{\theta},\ldots,\xi_{c_{J}}^{\theta}
\right).
\]
Here, \(g_c^\theta\) captures the student's utility from intrinsic school characteristics, \(\beta^\theta\) governs the value placed on different peer characteristics, and \(\xi_c^\theta\) is an idiosyncratic school-specific taste shock. Specifications of this form are widely used in empirical work on school choice and related settings (e.g.,
\citealp{epple2018superintendent,allende2019competition}).

We allow arbitrary correlation among student attributes $\chi^{\theta}$, priorities $r^{\theta}$, intrinsic utilities $g^{\theta}$, and peer-preference coefficients $\beta^{\theta}$. The only distributional restriction is that the taste shocks are independent of these variables and independently distributed across colleges according to a common distribution $F$ with a bounded density $f$. Formally, for measurable sets $X\subset\mathcal{X}, G\subset\mathbb{R}^{\mathcal{C}}, B\subset\mathbb{R}^{L}, R\subset\left[0,1\right]^{\mathcal{C}}$ and any rectangle $
V=\prod_{c\in\mathcal{C}}[a_{c},b_{c}]
\subseteq\mathbb{R}^{\mathcal{C}},
$
the measure $\eta$ satisfies
\begin{align*}
&\eta\left(
\left\{
\theta\ \middle|\ 
\chi^{\theta}\in X,\,
g^{\theta}\in G,\,
\beta^{\theta}\in B,\,
r^{\theta}\in R,\,
\xi^{\theta}\in V
\right\}
\right)
\\
&\qquad =
\eta\left(
\left\{
\theta\ \middle|\ 
\chi^{\theta}\in X,\,
g^{\theta}\in G,\,
\beta^{\theta}\in B,\,
r^{\theta}\in R
\right\}
\right)
\prod_{c\in\mathcal{C}}
\left[F(b_{c})-F(a_{c})\right].
\end{align*}
We refer to an economy satisfying these conditions as a \emph{linear-preferences economy}. The next lemma formally shows that these economies meet the assumptions required for the existence of a stable matching.

\begin{lem}
\label{lem:linear-economy-is-smooth} Let $E=[\Theta,\mathcal{C},\eta,q]$ be a linear-preferences economy with taste shocks distributed according to $F$. 
If $F$ has a bounded density $f$, then the economy satisfies diversity of preferences. In addition, if
\[
\beta_{\max}=\sup_{\theta\in\Theta}
\left\Vert\beta^{\theta}\right\Vert_{\infty}
<\infty,
\]
then cardinal utilities are $L\beta_{\max}$ peer-smooth.
\end{lem}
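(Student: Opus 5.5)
The proof splits into two independent parts: diversity of preferences, which follows from a union bound over pairs of colleges and a density bound on the shock differences, and peer-smoothness, which is a direct Hölder-type inequality.

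\emph{Diversity of preferences.} Fix $s$ and $\varepsilon>0$, and let $s'$ satisfy $\Vert s-s'\Vert_\infty<\delta$. If $\succ^{\theta|s}\neq\succ^{\theta|s'}$, then some pair of colleges $c\neq c'$ is ordered differently under $s$ and $s'$. Write the utility difference as
\[
\Delta_{cc'}^{\theta}(s)=g_{c}^{\theta}-g_{c'}^{\theta}+\beta^{\theta}\cdot(s_{c}-s_{c'})+\xi_{c}^{\theta}-\xi_{c'}^{\theta}.
\]
Define $Z_{cc'}^{\theta}(s)=g_{c}^{\theta}-g_{c'}^{\theta}+\beta^{\theta}\cdot(s_{c}-s_{c'})$, which does not involve the shocks.

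A reversal of the pair $(c,c')$ requires one of two things. Either the sign of $\Delta$ changes, or $\Delta$ equals zero under one of the two statistics, so that tie-breaking decides the order. In both cases $\xi_{c'}^{\theta}-\xi_{c}^{\theta}$ lies in the closed interval between $Z_{cc'}^{\theta}(s)$ and $Z_{cc'}^{\theta}(s')$. That interval has length at most $\vert\beta^{\theta}\cdot((s_c-s_{c'})-(s'_c-s'_{c'}))\vert\le 2L\Vert\beta^{\theta}\Vert_\infty\delta$.

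Next I bound the density of the shock difference. The difference $\xi_{c'}-\xi_c$ of two independent draws from $F$ has density $h(x)=\int f(y+x)f(y)\,dy\le \Vert f\Vert_\infty$. Its density is therefore bounded by $\bar f:=\sup f$, independently of the other components. Because of the product structure of $\eta$, I can condition on $(\chi,g,\beta,r)$. The conditional probability of a reversal of $(c,c')$ is then at most $\min\{1,2L\bar f\Vert\beta^{\theta}\Vert_\infty\delta\}$.

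The main obstacle is that the lemma does not assume $\beta$ is bounded. I handle this with a truncation. Choose $K$ such that $\eta(\Vert\beta^{\theta}\Vert_\infty>K)<\varepsilon/2$, which is possible because $\eta$ is a probability measure. Then choose $\delta$ such that $J^2\cdot 2L\bar f K\delta<\varepsilon/2$. A union bound over the at most $J^2$ pairs, applied on the event $\{\Vert\beta\Vert_\infty\le K\}$, gives total mass of preference changes less than $\varepsilon$.

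One measurability point needs care: the set of $\theta$ whose preferences change must be measurable. This follows because it is defined by finitely many inequalities in measurable functions of $\theta$.

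\emph{Peer-smoothness.} The shock terms cancel, so for any $\theta$ and $c$,
\[
\vert u^\theta(c;s)-u^\theta(c;s')\vert=\vert\beta^\theta\cdot(s_c-s'_c)\vert\le\sum_{l=1}^{L}\vert\beta_l^\theta\vert\,\vert s_{c,l}-s'_{c,l}\vert\le L\beta_{\max}\Vert s-s'\Vert_\infty.
\]
This holds at every $s$, which is exactly the claim that utilities are $L\beta_{\max}$ peer-smooth.
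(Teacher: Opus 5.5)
Your proof is correct and follows the paper's argument essentially step for step. You truncate $\beta$ on a set of mass below $\varepsilon/2$, take a union bound over the at most $J^2$ college pairs, and bound each reversal probability by $\bar f$ times an interval of length $2L\Vert\beta^\theta\Vert_\infty\Vert s-s'\Vert_\infty$; peer-smoothness then follows from the same H\"older bound. The differences are minor: you bound the density of $\xi_{c'}^\theta-\xi_c^\theta$ by convolution where the paper conditions on one shock and uses $f$ directly, and your closed-interval treatment of ties is slightly more careful than the paper's strict-inequality display.
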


The intuition is that a small change in the matching statistics can change a student's ranking of two colleges only if the student was initially close to indifferent between them. Individual students may nevertheless have highly heterogeneous preferences over peer composition, and the model permits arbitrary correlation between these preferences, student attributes, and college priorities. Moreover,
\[
\left|
\beta^{\theta}\cdot(s_{c}-s'_{c})
\right|
\leq
L\beta_{\max}
\left\Vert s-s'\right\Vert_{\infty}.
\]
Thus, uniformly bounded coefficients on peer characteristics imply the peer-smoothness condition used to obtain approximate stability in sampled economies.

Combining Lemma \ref{lem:linear-economy-is-smooth} with the main existence results yields the following immediate implication for empirical specifications.

\begin{thm}
\label{thm:linear-economy}
Consider a linear-preferences economy with independent taste shocks whose common distribution admits a bounded density. Then the continuum economy has a stable matching.

In addition, suppose that $\beta_{\max}=\sup_{\theta\in\Theta}\left\Vert\beta^{\theta}\right\Vert_{\infty}<\infty$ and that there exists a rational-expectations and market-clearing pair $\left(P^{*},s^{*}\right)$ such that $\partial_{P}D\left(P^{*},s^{*}\right)$ exists and is invertible. Then, for every $\varepsilon>0$, sufficiently large finite economies sampled from the continuum economy have an $\varepsilon$-stable matching with probability at least $1-\varepsilon$. 
\end{thm}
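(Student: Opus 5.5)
Theorem \ref{thm:linear-economy} follows by combining Lemma \ref{lem:linear-economy-is-smooth} with Theorems \ref{thm:(stable exists)} and \ref{thm:aprox-stable-for-sampled}. For the first claim, I invoke Lemma \ref{lem:linear-economy-is-smooth}: since the taste shocks have a common bounded density $f$, the linear-preferences economy satisfies diversity of preferences. Theorem \ref{thm:(stable exists)} then gives a stable matching in the continuum economy. No further assumption is needed, because Theorem \ref{thm:(stable exists)} requires only diversity of preferences.

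For the second claim, I use the hypothesis $\beta_{\max}<\infty$. The second part of Lemma \ref{lem:linear-economy-is-smooth} then shows that cardinal utilities are $L\beta_{\max}$ peer-smooth. In particular they are $\alpha$ locally peer-smooth at $s^{*}$ with $\alpha=L\beta_{\max}$. One technicality is that Theorem \ref{thm:aprox-stable-for-sampled} asks for $\alpha>0$. If $\beta_{\max}=0$, I replace $\alpha$ by any positive constant, say $\alpha=1$. This is legitimate because the bound $|u^{\theta}(c;s)-u^{\theta}(c;s')|\leq\alpha\|s-s'\|_{\infty}$ is preserved when $\alpha$ is enlarged. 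So I take $\alpha=\max\{L\beta_{\max},1\}$.

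Together with the assumed rational-expectations and market-clearing pair $(P^{*},s^{*})$ at which $\partial_{P}D(P^{*},s^{*})$ exists and is invertible, all hypotheses of Theorem \ref{thm:aprox-stable-for-sampled} hold. It yields, for every $\varepsilon>0$, an $N$ such that for $n>N$ the sampled economy $F^{n}$ has an $\varepsilon$-stable matching with probability at least $1-\varepsilon$. This is exactly the claim.

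The main thing to verify is that the hypotheses match precisely: the peer-smoothness constant must be strictly positive, which the enlargement of $\alpha$ handles, and the sampled economies must be generated as in Section \ref{sec:Approximately-Stability}. Since Lemma \ref{lem:linear-economy-is-smooth} is assumed, I expect no substantive obstacle.
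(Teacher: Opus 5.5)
Your proposal is correct and matches the paper's argument. For the first part, Lemma~\ref{lem:linear-economy-is-smooth} gives diversity of preferences and Theorem~\ref{thm:(stable exists)} gives existence. For the second part, the lemma's $L\beta_{\max}$ peer-smoothness feeds into Theorem~\ref{thm:aprox-stable-for-sampled}. Taking $\alpha=\max\{L\beta_{\max},1\}$ to guarantee $\alpha>0$ is a valid detail that the paper leaves implicit.
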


The first part follows from Lemma \ref{lem:linear-economy-is-smooth} and Theorem \ref{thm:(stable exists)}. The second follows from Lemma \ref{lem:linear-economy-is-smooth} and Theorem \ref{thm:aprox-stable-for-sampled}. Thus, the existence results in this paper apply directly to commonly used empirical models that combine intrinsic school characteristics, linear preferences over peer composition, and continuously distributed taste shocks.

\section{Conclusion \label{sec:Conclusion}}

The model developed in this paper allows a general form of peer-dependent preferences, and does not require continuous preferences or rely on leaving students indifferent between colleges. The diversity of preferences condition implies that while individual students may be discontinuous, the demand given cutoffs and statistics is continuous. Using this observation and the tractable cutoff representation, we show the existence of a stable matching and the existence of approximately stable matchings for finite sampled economies.

Our results suggest that, under conditions relevant for empirical applications, finite-market nonexistence is largely due to rounding errors, and that approximate stability is a natural empirical solution concept. By contrast, other failures of classic matching results previously identified in the literature are not due to rounding errors. For example, multiplicity of stable matchings can occur even in the continuum when students have strong incentives to coordinate. 

Standard matching models can, of course, accommodate preferences that reflect peer quality as long as those preferences are treated as fixed. For example, an elite school may be attractive largely because it consistently enrolls strong students; if its peer composition is effectively fixed, it does not matter whether the model endogenizes the school's quality. Endogenizing peer-dependent preferences becomes important when evaluating counterfactuals that substantially change school composition and, therefore, may also change the school's attractiveness.

Even when such counterfactuals are not the primary object of interest, it may still be important to distinguish intrinsic school quality from peer-driven attractiveness. This may be challenging to study empirically; the examples in this paper show that observed demand may reveal little about the school's intrinsic quality. 

Even when such counterfactuals are not the primary object of interest, it may still be important to distinguish intrinsic school quality from peer-driven attractiveness. This distinction may be difficult to identify empirically; the examples in this paper show that observed demand may reveal little about a school's intrinsic quality.

Our framework rules out preferences that depend on the assignment of particular individuals. Allowing such dependence would encompass settings such as matching with couples and introduce a distinct source of complications. Existing large-market results for couples suggest that these complications may also become manageable in sufficiently large markets \citep{kojima2013matching,ashlagi2014stability,nguyen2018near}.

\printbibliography

@article{abdulkadirouglu2003school,
  title={School choice: A mechanism design approach},
  author={Abdulkadiro{\u{g}}lu, Atila and S{\"o}nmez, Tayfun},
  journal={American Economic Review},
  volume={93},
  number={3},
  pages={729--747},
  year={2003}
}

@article{abdulkadirouglu2005new,
  title={The New York City high school match},
  author={Abdulkadiro{\u{g}}lu, Atila and Pathak, Parag A and Roth, Alvin E},
  journal={American Economic Review},
  volume={95},
  number={2},
  pages={364--367},
  year={2005}
}

@article{jagadeesan2017complementary,
  title={Complementary Inputs and the Existence of Stable Outcomes in Large Trading Networks.},
  author={Jagadeesan, Ravi},
  journal={Proceedings of the 2017 ACM Conference on Economics and Computation},
  pages={265},
  year={2017}
}

@article{rostek2020matching,
  title={Matching with complementary contracts},
  author={Rostek, Marzena and Yoder, Nathan},
  journal={Econometrica},
  volume={88},
  number={5},
  pages={1793--1827},
  year={2020},
  publisher={Wiley Online Library}
}

@article{greinecker2018pairwise,
  title={Pairwise stable matching in large economies},
  author={Greinecker, Michael and Kah, Christopher},
  journal={Econometrica},
  volume={89},
  number={6},
  pages={2929--2974},
  year={2021},
  publisher={Wiley Online Library}
}

@article{azevedo2018existence,
  title={Existence of equilibrium in large matching markets with complementarities},
  author={Azevedo, Eduardo M and Hatfield, John William},
  journal={Available at SSRN 3268884},
  year={2018}
}

@article{che2019stable,
  title={Stable matching in large economies},
  author={Che, Yeon-Koo and Kim, Jinwoo and Kojima, Fuhito},
  journal={Econometrica},
  volume={87},
  number={1},
  pages={65--110},
  year={2019},
  publisher={Wiley Online Library}
}

@article{schiltz2019does,
  title={Does it matter when your smartest peers leave your class? Evidence from Hungary},
  author={Schiltz, Fritz and Mazrekaj, Deni and Horn, Daniel and De Witte, Kristof},
  journal={Labour Economics},
  volume={59},
  pages={79--91},
  year={2019},
  publisher={Elsevier}
}

@article{pycia2019matching,
  title={Matching with externalities},
  author={Pycia, Marek and Yenmez, M Bumin},
  journal={The Review of Economic Studies},
  volume={90},
  number={2},
  pages={948--974},
  year={2023},
  publisher={Oxford University Press US}
}

@article{bando2012many,
  title={Many-to-one matching markets with externalities among firms},
  author={Bando, Keisuke},
  journal={Journal of Mathematical Economics},
  volume={48},
  number={1},
  pages={14--20},
  year={2012},
  publisher={Elsevier}
}

@article{sasaki1996two,
  title={Two-sided matching problems with externalities},
  author={Sasaki, Hiroo and Toda, Manabu},
  journal={Journal of Economic Theory},
  volume={70},
  number={1},
  pages={93--108},
  year={1996},
  publisher={Elsevier}
}

@article{mumcu2010stable,
  title={Stable one-to-one matchings with externalities},
  author={Mumcu, Ay{\c{s}}e and Saglam, Ismail},
  journal={Mathematical Social Sciences},
  volume={60},
  number={2},
  pages={154--159},
  year={2010},
  publisher={Elsevier}
}

@article{calsamiglia2020school,
  title={School choice design, risk aversion and cardinal segregation},
  author={Calsamiglia, Caterina and Mart{\'\i}nez-Mora, Francisco and Miralles, Antonio},
  journal={The Economic Journal},
  volume={131},
  number={635},
  pages={1081--1104},
  year={2021},
  publisher={Oxford University Press}
}

@article{baccara2012field,
  title={A field study on matching with network externalities},
  author={Baccara, Mariagiovanna and {\.I}mrohoro{\u{g}}lu, Ay{\c{s}}e and Wilson, Alistair J and Yariv, Leeat},
  journal={American Economic Review},
  volume={102},
  number={5},
  pages={1773--1804},
  year={2012}
}

@article{fisher2016matching,
  title={Matching with aggregate externalities},
  author={Fisher, James CD and Hafalir, Isa E},
  journal={Mathematical Social Sciences},
  volume={81},
  pages={1--7},
  year={2016},
  publisher={Elsevier}
}

@article{lavy2011mechanisms,
  title={Mechanisms and impacts of gender peer effects at school},
  author={Lavy, Victor and Schlosser, Analia},
  journal={American Economic Journal: Applied Economics},
  volume={3},
  number={2},
  pages={1--33},
  year={2011}
}

@article{lavy2012inside,
  title={Inside the black box of ability peer effects: Evidence from variation in the proportion of low achievers in the classroom},
  author={Lavy, Victor and Paserman, M Daniele and Schlosser, Analia},
  journal={The Economic Journal},
  volume={122},
  number={559},
  pages={208--237},
  year={2012},
  publisher={Oxford University Press Oxford, UK}
}

@article{booij2017ability,
  title={Ability peer effects in university: Evidence from a randomized experiment},
  author={Booij, Adam S and Leuven, Edwin and Oosterbeek, Hessel},
  journal={The Review of Economic Studies},
  volume={84},
  number={2},
  pages={547--578},
  year={2017},
  publisher={Oxford University Press}
}

@article{duflo2011peer,
  title={Peer effects, teacher incentives, and the impact of tracking: Evidence from a randomized evaluation in Kenya},
  author={Duflo, Esther and Dupas, Pascaline and Kremer, Michael},
  journal={American economic review},
  volume={101},
  number={5},
  pages={1739--74},
  year={2011}
}

@article{campos2022impact,
  title={The Impact of Public School Choice: Evidence from Los Angeles''s Zones of Choice},
  author={Campos, Christopher and Kearns, Caitlin},
  journal={The Quarterly Journal of Economics},
  volume={139},
  number={2},
  pages={1051--1093},
  year={2024},
  publisher={Oxford University Press}
}

@article{agarwal2020revealed,
  title={Revealed preference analysis of school choice models},
  author={Agarwal, Nikhil and Somaini, Paulo},
  journal={Annual Review of Economics},
  volume={12},
  pages={471--501},
  year={2020},
  publisher={Annual Reviews}
}

@inproceedings{Cox2022peer,
  title={Peer preferences in centralized school choice markets: theory and evidence},
  author={Cox, Natalie and Fonseca, Ricardo and Pakzad-Hurson, Bobak and Pecenco, Matthew},
  year={2023},
  institution={working paper}
}

@article{barseghyan2019peer,
  title={Peer Preferences, School Competition, and the Effects of Public School Choice},
  author={Barseghyan, Levon and Clark, Damon and Coate, Stephen},
  journal={American Economic Journal: Economic Policy},
  volume={11},
  number={4},
  pages={124--58},
  year={2019}
}

@article{abdulkadirouglu2015expanding,
  title={Expanding {`choice'} in school choice},
  author={Abdulkadiro{\u{g}}lu, Atila and Che, Yeon-Koo and Yasuda, Yosuke},
  journal={American Economic Journal: Microeconomics},
  volume={7},
  number={1},
  pages={1--42},
  year={2015}
}

@article{azevedo2013walrasian,
  title={Walrasian equilibrium in large, quasilinear markets},
  author={Azevedo, Eduardo M and Weyl, E Glen and White, Alexander},
  journal={Theoretical Economics},
  volume={8},
  number={2},
  pages={281--290},
  year={2013},
  publisher={Wiley Online Library}
}

@article{liu2014stable,
  title={Stable matching with incomplete information},
  author={Liu, Qingmin and Mailath, George J and Postlewaite, Andrew and Samuelson, Larry},
  journal={Econometrica},
  volume={82},
  number={2},
  pages={541--587},
  year={2014},
  publisher={Wiley Online Library}
}

@article{rothstein2006good,
  title={Good principals or good peers? Parental valuation of school characteristics, Tiebout equilibrium, and the incentive effects of competition among jurisdictions},
  author={Rothstein, Jesse M},
  journal={American Economic Review},
  volume={96},
  number={4},
  pages={1333--1350},
  year={2006}
}

@article{bianchi2020indirect,
  title={The Indirect Effects of Educational Expansions: Evidence from a Large Enrollment Increase in University Majors},
  author={Bianchi, Nicola},
  journal={Journal of Labor Economics},
  volume={38},
  number={3},
  pages={767--804},
  year={2020},
  publisher={The University of Chicago Press Chicago, IL}
}

@article{ashlagi2014stability,
  title={Stability in large matching markets with complementarities},
  author={Ashlagi, Itai and Braverman, Mark and Hassidim, Avinatan},
  journal={Operations Research},
  volume={62},
  number={4},
  pages={713--732},
  year={2014},
  publisher={INFORMS}
}

@article{kojima2013matching,
  title={Matching with couples: Stability and incentives in large markets},
  author={Kojima, Fuhito and Pathak, Parag A and Roth, Alvin E},
  journal={The Quarterly Journal of Economics},
  volume={128},
  number={4},
  pages={1585--1632},
  year={2013},
  publisher={MIT Press}
}

@article{chakraborty2010two,
  title={Two-sided matching with interdependent values},
  author={Chakraborty, Archishman and Citanna, Alessandro and Ostrovsky, Michael},
  journal={Journal of Economic Theory},
  volume={145},
  number={1},
  pages={85--105},
  year={2010},
  publisher={Elsevier}
}

@article{klaus2005stable,
  title={Stable matchings and preferences of couples},
  author={Klaus, Bettina and Klijn, Flip},
  journal={Journal of Economic Theory},
  volume={121},
  number={1},
  pages={75--106},
  year={2005},
  publisher={Elsevier}
}

@article{ashlagi2014improving,
  title={Improving community cohesion in school choice via correlated-lottery implementation},
  author={Ashlagi, Itai and Shi, Peng},
  journal={Operations Research},
  volume={62},
  number={6},
  pages={1247--1264},
  year={2014},
  publisher={INFORMS}
}

@article{dur2019school,
  title={School choice with neighbors},
  author={Dur, Umut Mert and Wiseman, Thomas},
  journal={Journal of Mathematical Economics},
  volume={83},
  pages={101--109},
  year={2019},
  publisher={Elsevier}
}

@article{pycia2012stability,
  title={Stability and preference alignment in matching and coalition formation},
  author={Pycia, Marek},
  journal={Econometrica},
  volume={80},
  number={1},
  pages={323--362},
  year={2012},
  publisher={Wiley Online Library}
}

@article{echenique2007solution,
  title={A solution to matching with preferences over colleagues},
  author={Echenique, Federico and Yenmez, M Bumin},
  journal={Games and Economic Behavior},
  volume={59},
  number={1},
  pages={46--71},
  year={2007},
  publisher={Elsevier}
}

@article{abdulkadirouglu2020parents,
  title={Do parents value school effectiveness?},
  author={Abdulkadiro{\u{g}}lu, Atila and Pathak, Parag A and Schellenberg, Jonathan and Walters, Christopher R},
  journal={American Economic Review},
  volume={110},
  number={5},
  pages={1502--39},
  year={2020}
}

@article{dworczak2016deferred,
  title={Deferred acceptance with compensation chains},
  author={Dworczak, Piotr},
  journal={Proceedings of the 2016 ACM Conference on Economics and Computation},
  pages={65--66},
  year={2016}
}

@article{blum1997vacancy,
  title={Vacancy chains and equilibration in senior-level labor markets},
  author={Blum, Yosef and Roth, Alvin E and Rothblum, Uriel G},
  journal={Journal of Economic Theory},
  volume={76},
  number={2},
  pages={362--411},
  year={1997},
  publisher={Elsevier}
}

@article{vapnik1972uniform,
  title={On the uniform convergence of relative frequencies of events to their probabilities},
  author={Vapnik, Vladimir N and Chervonenkis, Alexey Y},
  editor={Vovk V., Papadopoulos H., Gammerman A.},
  journal={Measures of Complexity},
  publisher={Springer, Cham},
  pages={11--30},
  year={1972}
}

@article{epple2018superintendent,
  title={The superintendent's dilemma: Managing school district capacity as parents vote with their feet},
  author={Epple, Dennis and Jha, Akshaya and Sieg, Holger},
  journal={Quantitative Economics},
  volume={9},
  number={1},
  pages={483--520},
  year={2018},
  publisher={Wiley Online Library}
}

@article{epple1998competition,
  title={Competition between private and public schools, vouchers, and peer-group effects},
  author={Epple, Dennis and Romano, Richard E},
  journal={American Economic Review},
  pages={33--62},
  year={1998},
  publisher={JSTOR}
}

@article{allende2019competition,
  title={Competition Under Social Interactions and the Design of Education Policies},
  author={Allende, Claudia},
  journal={Working Paper},
  year={2021}
}

@book{marx1959groucho,
  title={Groucho and me: the autobiography of Groucho Marx},
  author={Marx, Groucho},
  year={1959},
  publisher={Victor Gollancz, London}
}

@article{azevedo2016supply,
  title={A supply and demand framework for two-sided matching markets},
  author={Azevedo, Eduardo M and Leshno, Jacob D},
  journal={Journal of Political Economy},
  volume={124},
  number={5},
  pages={1235--1268},
  year={2016},
  publisher={University of Chicago Press Chicago, IL}
}

@article{ellickson1999clubs,
  title={Clubs and the Market},
  author={Ellickson, Bryan and Grodal, Birgit and Scotchmer, Suzanne and Zame, William R},
  journal={Econometrica},
  volume={67},
  number={5},
  pages={1185--1217},
  year={1999},
  publisher={Wiley Online Library}
}

@article{che2022prestige,
  title={Prestige seeking in college application and major choice},
  author={Che, Yeon-Koo and Hahm, Dong Woo and Kim, Jinwoo and Kim, Se-Jik and Tercieux, Olivier},
  journal={Available at SSRN 4309000},
  year={2025}
}

@article{phan2024crowding,
  title={Crowding in school choice},
  author={Phan, William and Tierney, Ryan and Zhou, Yu},
  journal={American Economic Review},
  volume={114},
  number={8},
  pages={2526--2552},
  year={2024},
  publisher={American Economic Association 2014 Broadway, Suite 305, Nashville, TN 37203}
}

@article{carmona2023existence,
  title={Existence of stable matchings in large economies with externalities},
  author={Carmona, Guilherme and Laohakunakorn, Krittanai},
  journal={University of Surrey.[1264]},
  year={2023}
}

@techreport{campos2024social,
  title={Social Interactions, Information, and Preferences for Schools: Experimental Evidence from Los Angeles},
  author={Campos, Christopher},
  year={2024},
  institution={National Bureau of Economic Research}
}

@article{beuermann2023good,
  title={What is a good school, and can parents tell? Evidence on the multidimensionality of school output},
  author={Beuermann, Diether W and Jackson, C Kirabo and Navarro-Sola, Laia and Pardo, Francisco},
  journal={The Review of Economic Studies},
  volume={90},
  number={1},
  pages={65--101},
  year={2023},
  publisher={Oxford University Press}
}

@online{joffe-walt2020introducing,
  author       = {Chana Joffe-Walt},
  title        = {Introducing ``Nice White Parents'''},
  year         = {2020},
  month        = {7},
  url          = {https://www.nytimes.com/2020/07/23/podcasts/nice-white-parents-serial.html},
  note         = {New York Times, podcasts section (accessed August 2025)}

}

@article{liu2016ordinal,
  title={Ordinal efficiency, fairness, and incentives in large markets},
  author={Liu, Qingmin and Pycia, Marek},
  journal={Working Paper},
  year={2016}
}

@article{abowd1999high,
  title={High Wage Workers and High Wage Firms},
  author={Abowd, John M. and Kramarz, Francis and Margolis, David N.},
  journal={Econometrica},
  volume={67},
  number={2},
  pages={251--333},
  year={1999},
  doi={10.1111/1468-0262.00020}
}

@article{bender2018management,
  title={Management Practices, Workforce Selection, and Productivity},
  author={Bender, Stefan and Bloom, Nicholas and Card, David and {Van Reenen}, John and Wolter, Stefanie},
  journal={Journal of Labor Economics},
  volume={36},
  number={S1},
  pages={S371--S409},
  year={2018},
  doi={10.1086/694107}
}

@article{nguyen2018near,
  title={Near-feasible stable matchings with couples},
  author={Nguyen, Thanh and Vohra, Rakesh},
  journal={American Economic Review},
  volume={108},
  number={11},
  pages={3154--3169},
  year={2018},
  publisher={American Economic Association 2014 Broadway, Suite 305, Nashville, TN 37203}
}

@article{diamond2016determinants,
  title={The determinants and welfare implications of US workers' diverging location choices by skill: 1980--2000},
  author={Diamond, Rebecca},
  journal={American economic review},
  volume={106},
  number={3},
  pages={479--524},
  year={2016},
  publisher={American Economic Association 2014 Broadway, Suite 305, Nashville, TN 37203}
}

@article{almagro2025location,
  title={Location sorting and endogenous amenities: Evidence from amsterdam},
  author={Almagro, Milena and Dom{\'\i}nguez-Iino, Tom{\'a}s},
  journal={Econometrica},
  volume={93},
  number={3},
  pages={1031--1071},
  year={2025},
  publisher={Wiley Online Library}
}

@article{bayer2007unified,
  title={A unified framework for measuring preferences for schools and neighborhoods},
  author={Bayer, Patrick and Ferreira, Fernando and McMillan, Robert},
  journal={Journal of political economy},
  volume={115},
  number={4},
  pages={588--638},
  year={2007},
  publisher={The University of Chicago Press}
}

\appendix

\newpage

\section{Additional Examples\label{sec:Additional-Examples}}

\paragraph{A Pareto-Dominated Stable Matching}
\begin{example}
\label{exa:Pareto-dominated} The economy is identical to the economy
of Example \ref{exa:multiple-matchings}, except that for a student
$\theta$ with $\chi^{\theta}\in\left[0,1/2\right]$ preference parameters
are $g_{c_{1}}^{\theta}=2,g_{c_{2}}^{\theta}=1$. For a student $\theta$
with $\chi^{\theta}\in\left[1/2,1\right]$ we maintain that $g_{c_{1}}^{\theta}=1,g_{c_{2}}^{\theta}=2$. 
\end{example}
The matchings $\mu^{\dagger},\mu^{\circ}$ given by 
\[
\begin{array}{ccc}
\mu^{\dagger}\left(\theta\right)=\begin{cases}
c_{1} & \chi^{\theta}\geq1/2\\
c_{2} & \chi^{\theta}<1/2
\end{cases}, &  & \mu^{\circ}\left(\theta\right)=\begin{cases}
c_{2} & \chi^{\theta}\geq1/2\\
c_{1} & \chi^{\theta}<1/2
\end{cases}\end{array}
\]
are stable matchings for the economy of Example \ref{exa:Pareto-dominated},
because this economy is identical to the economy of Example \ref{exa:multiple-matchings}
except for the change to the preferences of students who only have
a single college in their budget set. In the economy of Example \ref{exa:Pareto-dominated}
all students obtain higher utility under the matching $\mu^{\circ}$
than under $\mu^{\dagger}$, as students have identical peers under
both matchings but $\mu^{\circ}$ assigns each student $\theta$ to
the college that maximizes $g_{\mu\left(\theta\right)}^{\theta}$
and $\mu^{\dagger}$ assigns each student $\theta$ to the college
that minimizes $g_{\mu\left(\theta\right)}^{\theta}$.

\paragraph{Different Stable Matchings with Identical Cutoffs}
\begin{example}
\label{exa:Diff-Match-Same-Cutoffs}The economy is identical to the
economy of Example \ref{exa:multiple-matchings}, except that college
capacities are $q_{c_{1}}=q_{c_{2}}=2$, and the preferences parameters
of student $\theta$ with GPA $\chi^{\theta}$ are $\beta^{\theta}=16\cdot\left(\chi^{\theta}+1\right)$,
$g_{c_{1}}^{\theta}=1,g_{c_{2}}^{\theta}=2$.
\end{example}
In the economy of Example \ref{exa:Diff-Match-Same-Cutoffs}, if all students are assigned to college $c\in\left\{ c_{1},c_{2}\right\} $ then all students prefer college $c$. Thus, the matchings $\mu^{1}\left(\theta\right)\equiv c_{1}$
and $\mu^{2}\left(\theta\right)\equiv c_{2}$ are both stable matchings. The corresponding cutoffs and matching statistics are $P^{1}=\left(0,0\right),s^{1}=\left(\frac{1}{2},0\right)$ and $P^{2}=\left(0,0\right),s^{2}=\left(0,\frac{1}{2}\right)$. 

\vspace{1cm}

\paragraph{Failure of the Rural Hospital Theorem}
\begin{example}
\label{exa:No-rural-hopital}There are two colleges $\mathcal{C}=\left\{ c_{1},c_{2}\right\} $,
each with capacity $\frac{1}{3}$. The set of student types is $\Theta=\mathcal{X}\times\Gamma\times\left[0,1\right]^{2}$
with student attributes $\mathcal{X}=\left[0,1\right]$, and preference
parameters $\Gamma=\left[0,1\right]$. There are three groups of students.
Group 1 is a $1/3$ mass of students distributed uniformly over $G_{1}=\left\{ \theta\mid\chi^{\theta}=0,\gamma^{\theta}=r_{c_{1}}^{\theta}\in\left[\frac{1}{3},\frac{2}{3}\right],r_{c_{2}}^{\theta}=r_{c_{1}}^{\theta}-\frac{1}{3}\in\left[0,\frac{1}{3}\right]\right\} $,
that is, students in group 1 have attribute $\chi^{\theta}=0$, are
the middle priority group at $c_{1}$, and the lowest priority at
$c_{2}$. Group 2 is a $1/3$ mass of students distributed uniformly
over $G_{2}=\left\{ \theta\mid\chi^{\theta}=0,\gamma^{\theta}=r_{c_{1}}^{\theta}\in\left[0,\frac{1}{3}\right],r_{c_{2}}^{\theta}=r_{c_{1}}^{\theta}+\frac{1}{3}\in\left[\frac{1}{3},\frac{2}{3}\right]\right\} $,
that is, students in group 2 have attribute $\chi^{\theta}=0$, are
the middle priority group at $c_{2}$, and the lowest priority at
$c_{1}$. Group 3 is contains a $1/3$ mass of students which is uniformly
distributed over $G_{3}=\left\{ \theta\mid\chi^{\theta}=1,\gamma^{\theta}=r_{c_{1}}^{\theta}=r_{c_{2}}^{\theta}\in\left[\frac{2}{3},1\right]\right\} $,
that is, students in group 3 have attribute $\chi^{\theta}=1$ and
hold the top priorities at both colleges. The matching statistics
$s$ correspond to the mass of students from group 3 at each college.
Student preferences are given by
\[
u^{\theta}\left(c;s\right)=g_{c}^{\theta}+10\cdot s_{c},
\]
 where $g_{c_{1}}^{\theta}=1+\gamma^{\theta},g_{c_{2}}^{\theta}=1$
(the variation in $g_{c_{1}}^{\theta}-g_{c_{2}}^{\theta}$ ensures
that the diversity of preferences assumption is satisfied). 

In the economy of Example \ref{exa:No-rural-hopital} all students prefer $c_{1}\succ^{\theta|s}c_{2}$ if all students from group 3 are assigned to college $c_{1}$ (i.e., $s=\left(1/3,0\right)$), and all students have the reverse preferences $c_{2}\succ^{\theta|s}c_{1}$ if all students from group 3 are assigned to college $c_{2}$ (i.e., $s=\left(0,1/3\right)$). Thus, both of the following matchings are stable:
\[
\begin{array}{cc}
\mu^{1}\left(\theta\right)=\begin{cases}
\phi & \theta\in G_{1}\\
c_{2} & \theta\in G_{2}\\
c_{1} & \theta\in G_{3}
\end{cases},\,\,\,\,\, & \mu^{2}\left(\theta\right)=\begin{cases}
c_{1} & \theta\in G_{1}\\
\phi & \theta\in G_{2}\\
c_{2} & \theta\in G_{3}
\end{cases}\end{array}.
\]
 The stable matching $\mu^{1}$ assigns the students in $G_{2}$ but
leaves students in $G_{1}$ unassigned, while the stable matching
$\mu^{2}$ assigns the students in $G_{1}$ but leaves students in
$G_{2}$ unassigned.
\end{example}

\section{Omitted proofs}
\begin{proof}[Proof of Lemma \ref{lem:(cutoff<->match)}]
\textbf{\emph{}} Let the pair $\left(P,s\right)$ be rational expectations
and market clearing, and consider the matching $\mu$ defined by $\mu\left(\theta\right)=D^{\theta}\left(P,s\right)$.
Because $\left(P,s\right)$ is market clearing, for all $c\in\mathcal{C}$
we have that $\eta\left(\mu\left(c\right)\right)=D_{c}\left(P,s\right)\leq q_{c}$. 

Consider a student-college pair $\left(\theta,c\right)$ such that
$\theta$ prefers $c$ over $\mu\left(\theta\right)$ given the matching
$\mu$, i.e., $c\succ^{\theta|\mu}\mu\left(\theta\right)$. Because
\[
\mu\left(\theta\right)=D^{\theta}\left(P,s\right)=\arg\max_{\succ^{\theta|s}}B^{\theta}\left(P\right)
\]
and by rational expectations we have that $\succ^{\theta|\mu}=\succ^{\theta|s}$,
it must be that $c\notin B^{\theta}\left(P\right)$. That is, $r_{c}^{\theta}<P_{c}$.
This implies that $P_{c}>0$, and from market clearing it must be
that $\eta\left(\mu\left(c\right)\right)=D_{c}\left(P,s\right)=q_{c}$.
Any $\theta'\in\mu\left(c\right)$ must have higher priority than
$P_{c}$, that is, $r_{c}^{\theta'}\geq P_{c}>r_{c}^{\theta}$. Therefore,
$\left(\theta,c\right)$ is not a blocking pair. Since $\mu$ satisfies
capacity constraints and is not blocked, it is a stable matching. 

Conversely, let $\mu$ be a stable matching. Define $\left(P,s\right)$
by 
\begin{align*}
\begin{array}{cc}
P_{c} & =\begin{cases}
\inf\left\{ r_{c}^{\theta}\mid\theta\in\mu(c)\right\}  & \mbox{ if }\eta\left(\mu\left(c\right)\right)=q_{c}\\
0 & \mbox{ if }\eta\left(\mu\left(c\right)\right)<q_{c}
\end{cases},\end{array} &  & s=\sigma\left(\mu\right).
\end{align*}
For the sake of contradiction, suppose that for some student $\theta$
we have that $c=\mu\left(\theta\right)\neq D^{\theta}\left(P,s\right)=c'$.
Because $\theta\in\mu\left(c\right)$, we have that $r_{c}^{\theta}\geq P_{c}$
and $c\in B^{\theta}\left(P\right)$. Because $D^{\theta}\left(P,s\right)=\arg\max_{\succ^{\theta|s}}B^{\theta}\left(P\right)$
and by definition $\succ^{\theta|\mu}=\succ^{\theta|s}$, we have
that $c'\in B^{\theta}\left(P\right)$ and $c'\succ^{\theta|\mu}c$.
Thus, $r_{c'}^{\theta}\geq P_{c'}$, and either $\eta\left(\mu\left(c'\right)\right)<q_{c'}$
or there exists $\theta'\in\mu\left(c'\right)$ such that $r_{c'}^{\theta}\geq r_{c'}^{\theta'}$.
In the former case, $\left(\theta,c'\right)$ form a blocking pair.
In the latter case, by right continuity of $\mu$ there exists $\theta^{\circ}$
such that $c'\succ^{\theta^{\circ}|\mu}c$ and $r_{c'}^{\theta^{\circ}}>r_{c'}^{\theta'}$
and $\left(\theta^{\circ},c'\right)$ form a blocking pair. Either
contradicts the stability of $\mu$.

It thus follows that $\mu\left(\theta\right)=D^{\theta}\left(P,s\right)$
for all students $\theta$. Rational expectations immediately follows.
Market clearing follows because stability of $\mu$ implies that $D_{c}\left(P,s\right)=\eta\left(\mu\left(c\right)\right)\leq q_{c}$,
and for $D_{c}\left(P,s\right)=\eta\left(\mu\left(c\right)\right)<q_{c}$
we have by definition that $P_{c}=0$. 
\end{proof}

\subsubsection{Existence of a Stable Matching }
\begin{proof}[Proof of Theorem \ref{thm:(stable exists)}]
\textbf{\emph{}} Let $E=[\Theta,C,\eta,q]$ be an economy satisfying
Assumption \ref{ass:(Continuity with Stats)}. Recall that, without
loss of generality, we normalize the total measure so that $\eta\left(\Theta\right)=1$,
and normalize priorities so that a student $\theta$'s priority at
each college $c$ is the student's percentile in college $c$'s rankings,
that is, $r_{c}^{\theta}=\eta\left(\left\{ \theta'\mid r_{c}^{\theta}>r_{c}^{\theta'}\right\} \right)\in\left[0,1\right]$
for any $c\in\mathcal{C}$, $\theta\in\Theta$.

By Lemma \ref{lem:(cutoff<->match)}, it suffices to show the existence
of a rational expectations market-clearing cutoff $\left(P,s\right)$.
Let $Z\left(P,s\right)=Z^{P}\left(P,s\right)\times Z^{s}\left(P,s\right)$
be defined by 

\begin{eqnarray*}
Z_{c}^{P}\left(P,s\right) & = & \begin{cases}
\frac{P_{c}}{1+q_{c}-D_{c}\left(P,s\right)} & \mbox{if }D_{c}(p,s)\leq q_{c}\\
P_{c}+D_{c}\left(P,s\right)-q_{c} & \mbox{if }D_{c}(p,s)\geq q_{c}
\end{cases}
\end{eqnarray*}
and 

\[
Z^{s}\left(P,s\right)=\sigma\left(\mu\right)\mbox{ for }\mu\text{ defined by }\mu\left(\theta\right)=D^{\theta}\left(P,s\right)\,.
\]

First, observe that if $\left(P,s\right)$ is a fixed point of $Z$,
then $\left(P,s\right)$ is rational expectations and market clearing.
By definition, $Z^{s}\left(P,s\right)=s$ implies that $\left(P,s\right)$
satisfies the rational expectations condition (ii). If $Z^{p}\left(P,s\right)=P$,
then for every college $c$ either (a) $D_{c}\left(P,s\right)=q_{c}$,
or (b) $D_{c}\left(P,s\right)\leq q_{c}$ and $P_{c}=0$. Therefore,
$Z^{p}\left(P,s\right)=P$ implies that $\left(P,s\right)$ satisfies
the market clearing condition (i). 

Second, we show that $Z$ is a continuous function from the compact,
convex set $\left[0,1\right]^{\mathcal{C}}\times\mathcal{X^{C}}$
to itself. Because $\mathcal{X}$ is convex, closed, and $0\in\mathcal{X}$,
we have that $Z^{s}\left(P,s\right)\in\mathcal{X^{C}}$. To see that
$Z^{P}\left(P,s\right)\in[0,1]^{\mathcal{C}}$ observe that if $D_{c}\left(P,s\right)\leq q_{c}$,
we have $0\leq Z_{c}^{P}\left(P,s\right)\leq P_{c}\leq1$, and if
$D_{c}\left(P,s\right)\geq q_{c}$, we have $0\leq Z_{c}^{p}\left(P,s\right)\leq1-q_{c}\leq1$
because $D_{c}\left(P,s\right)\leq\eta\left(\left\{ \theta\mid r_{c}^{\theta}\geq P_{c}\right\} \right)=1-P_{c}$.
Therefore, $Z$ is a function from $\left[0,1\right]^{\mathcal{C}}\times\mathcal{X^{C}}$
to itself.

To show that $Z$ is a continuous function, we show that $Z^{s}\left(\cdot,\cdot\right)$
and $D\left(\cdot,\cdot\right)$ are continuous. For $\left(P,s\right),\left(P',s'\right)$
define 
\begin{align*}
\Delta\left(P,s;P',s'\right)= & \left\{ \theta\mid B^{\theta}\left(P\right)\neq B^{\theta}\left(P'\right)\right\} \cup\left\{ \theta\mid\succ^{\theta|s}\neq\succ^{\theta|s'}\right\} 
\end{align*}
to be the set of students who have a different budget set under $P$
and $P'$ or have different preferences under $s$ and $s'$. For
any $\theta\notin\Delta\left(P,s;P',s'\right)$ we have that $D^{\theta}\left(P,s\right)=D^{\theta}\left(P',s'\right)$,
and thus
\begin{align}
\left\Vert D\left(P,s\right)-D\left(P',s'\right)\right\Vert _{\infty} & \leq\eta\left(\Delta\left(P,s;P',s'\right)\right)\label{eq:bound-D-diff}\\
\left\Vert Z^{s}\left(P,s\right)-Z^{s}\left(P',s'\right)\right\Vert _{\infty} & \leq\sup_{x\in\mathcal{X}}\left\Vert x\right\Vert _{\infty}\cdot\eta\left(\Delta\left(P,s;P',s'\right)\right).\nonumber 
\end{align}
Denote $M=\sup_{x\in\mathcal{X}}\left\Vert x\right\Vert _{\infty}$.
Because $\mathcal{X}$ is bounded, we have that $M<\infty$. We have
that\footnote{We use $\oplus$ to denote the symmetric difference between sets,
that is $A\oplus B=A\cup B\setminus A\cap B$. } {\small{}
\begin{align*}
\eta\left(\left\{ \theta\mid B^{\theta}\left(P\right)\neq B^{\theta}\left(P'\right)\right\} \right) & \leq\sum_{c}\eta\left(\left\{ \theta\mid c\in B^{\theta}\left(P\right)\oplus B^{\theta}\left(P'\right)\right\} \right)\\
 & =\sum_{c}\eta\left(\left\{ \theta\mid r_{c}^{\theta}\in\left[\min\left(P_{c},P'_{c}\right),\max\left(P_{c},P'_{c}\right)\right]\right\} \right)\\
 & =\sum_{c}\left|P_{c}-P'_{c}\right|\\
 & =\left\Vert P-P'\right\Vert _{1}\\
 & \leq J\left\Vert P-P'\right\Vert _{\infty}\,,
\end{align*}
}where the equality in the third line follows from our normalization of $r_{c}^{\theta}$ (recall that $J=\left|\mathcal{C}\right|$ is the number of colleges). 

By Assumption \ref{ass:(Continuity with Stats)}, for any $\varepsilon>0$
there exists $\delta>0$ such that for any $\left\Vert s-s'\right\Vert _{\infty}<\delta$
we have that $\eta\left(\left\{ \theta\,\mid\,\succ^{\theta|s}\neq\succ^{\theta|s'}\right\} \right)<\frac{\varepsilon}{2M}$.
Together, we have that if $\left\Vert P-P'\right\Vert _{\infty},\left\Vert s-s'\right\Vert _{\infty}<\min\left(\delta,\frac{\varepsilon}{2MJ}\right)$, then we have that 
\begin{align*}
\eta\left(\Delta\left(P,s;P',s'\right)\right) & \leq\eta\left(\left\{ \theta\mid B^{\theta}\left(P\right)\neq B^{\theta}\left(P'\right)\right\} \right)+\eta\left(\left\{ \theta\,\mid\,\succ^{\theta|s}\neq\succ^{\theta|s'}\right\} \right)\\
 & \leq\frac{\varepsilon}{2M}+\frac{\varepsilon}{2M}\\
 & =\varepsilon/M\,.
\end{align*}
Plugging this bound into equation (\ref{eq:bound-D-diff}) shows that
$Z^{s}\left(\cdot,\cdot\right)$ and $D\left(\cdot,\cdot\right)$
are continuous in the sup-norm, and thus $Z$ is a continuous
function. 

By Brouwer's fixed-point theorem, $Z$ has a fixed point $Z\left(P^{*},s^{*}\right)=\left(P^{*},s^{*}\right)$.
We have that $\left(P^{*},s^{*}\right)$ is rational expectation and
market clearing. By Lemma \ref{lem:(cutoff<->match)} $\mu^{*}$ defined
by $\mu\left(\theta\right)=D^{\theta}\left(P^{*},s^{*}\right)$ is
a stable matching.
\end{proof}

\subsubsection{Existence of an Approximately Stable Matching in Large Sampled Economies}
\begin{proof}[Proof of Theorem \ref{thm:aprox-stable-for-sampled}]

The proof builds upon the convergence of standard economies (see the
online appendix of \citet{azevedo2016supply}) to show that with high
probability we can find $\left(P^{n},s^{n}\right)$ that are close
to $\left(P^{*},s^{*}\right)$ such that $\mu^{n}\left(\theta\right)=D^{\theta}\left(P^{n},s^{n}\right)$
is an $\varepsilon$-stable matching. For ease of reading, steps in the proof are stated as claims. 

Fixing $s^{*}$, we generate a standard continuum economy $\tilde{E}=[\tilde{\Theta},\mathcal{C},\tilde{\eta},q]$
in which $\tilde{\theta}=\left(\succ^{\theta},r^{\theta}\right)\in\tilde{\Theta}$
and $\tilde{\eta}$ is given by 
\[
\tilde{\eta}\left(\left\{ \tilde{\theta}\mid\succ^{\theta}\in O,r^{\theta}\in R\right\} \right)=\eta\left(\left\{ \theta\mid\succ^{\theta|s^{*}}\in O,r^{\theta}\in R\right\} \right)
\]
for any subset of preference orderings $O$ and a subset of priorities
$R\subset\left[0,1\right]^{\mathcal{C}}$. Likewise, by fixing $s^{*}$
we reduce a sampled economy $F^{n}$ to an economy $\tilde{F}^{n}$
without peer-dependent preferences. Because $\tilde{\eta}$ is a projection
of $\eta$, the distribution of $\tilde{F}^{n}$ is identical to the
distribution of economies sampled from $\tilde{\eta}$. Let $D\left(P\mid\tilde{\eta}\right)=D\left(P,s^{*}\mid\eta\right)$
denote the demand for economy $\tilde{E}$.

Cutoffs $P^{*}$ are market clearing for $\tilde{E}=[\tilde{\Theta},\mathcal{C},\tilde{\eta},q]$.
By applying the rural hospital theorem to the standard economy $\tilde{E}$,
if $D_{c}\left(P^{*}\mid\tilde{\eta}\right)<q_{c}$, college $c$
does not fill its capacity under any stable matching and $P_{c}=0$
for any cutoffs $P$ that are market clearing for $\tilde{E}$. Let
$C\left(P^{*}\right)=\left\{ c\in\mathcal{C}\mid D_{c}\left(P^{*}\mid\tilde{\eta}\right)=q_{c}\right\} $
denote the set of colleges that can have a strictly positive market
clearing cutoff, and let $\mathcal{N}\left(P^{*}\right)=\left\{ P\in\left[0,1\right]^{\mathcal{C}}\mid P_{c}=0\text{ for all }c\notin C\left(P^{*}\right)\right\} $
denote the set of cutoffs that differ from $P^{*}$ only for colleges
in $C\left(P^{*}\right)$. Let $\mathcal{B}_{\varepsilon}\left(P^{*}\right)=\left\{ P\in\left[0,1\right]^{\mathcal{C}}\mid\left\Vert P-P^{*}\right\Vert _{\infty}<\varepsilon\right\} $
denote the $\varepsilon$-ball around $P^{*}$.

Define\footnote{We use the notation $\left[x\right]^{+}=\max\left\{ x,0\right\} $.}
\[
\mathcal{Z}\left(P\mid q,\tilde{\eta}\right)=\max_{c\in C\left(P^{*}\right)}\left\{ P_c\cdot\left[q_{c}-D_{c}\left(P\mid\tilde{\eta}\right)\right]^{+}+\left[D_{c}\left(P\mid\tilde{\eta}\right)-q_{c}\right]^{+}\right\} \,,
\]
 and observe that $\mathcal{Z}\left(P\mid q,\tilde{\eta}\right)\geq0$
for any $P$, and $\mathcal{Z}\left(P\mid q,\tilde{\eta}\right)=0$
if and only if $P$ is market clearing for $\tilde{E}=[\tilde{\Theta},\mathcal{C},\tilde{\eta},q]$. 

\paragraph*{Claim 1: }

For any $\varepsilon_{1}>0$ there exists $m_{1}>0$ such that for
any $P\in\mathcal{N}\left(P^{*}\right)\setminus\mathcal{B}_{\varepsilon_{1}}\left(P^{*}\right)$
we have that $\mathcal{Z}\left(P\mid q,\tilde{\eta}\right)\geq m_{1}\,$.

\,

Fix $\varepsilon_{1}>0$. Observe that $D\left(P\mid\tilde{\eta}\right)$
is continuous in $P$ because 
\begin{align*}
\left\Vert D\left(P\mid\tilde{\eta}\right)-D\left(P'\mid\tilde{\eta}\right)\right\Vert _{\infty} & =\left\Vert D\left(P,s^{*}\mid\eta\right)-D\left(P',s^{*}\mid\eta\right)\right\Vert _{\infty}\\
 & \leq\eta\left(\left\{ \theta\mid B^{\theta}\left(P\right)\neq B^{\theta}\left(P'\right)\right\} \right)\\
 & \leq\left\Vert P-P'\right\Vert _{1}\,.
\end{align*}
Assume, for the sake of contradiction, that no such $m_{1}>0$ exists.
From continuity of $D\left(\cdot\mid\tilde{\eta}\right)$, the function
$\mathcal{Z}\left(P\right)$ is continuous, and there exists $P^{\dagger}\in\mathcal{N}\left(P^{*}\right)\setminus\mathcal{B}_{\varepsilon_{1}}\left(P^{*}\right)$
such that $\mathcal{Z}\left(P^{\dagger}\mid q,\tilde{\eta}\right)=0$.

We construct a standard economy $\tilde{E}^{\dagger}$ in which $P^{*},P^{\dagger}$
are market-clearing. Construct the economy $\tilde{E}^{\dagger}=[\tilde{\Theta},\mathcal{C},\tilde{\eta},q^{\dagger}]$
to be identical to $\tilde{E}$ except for having college capacities
$q_{c}^{\dagger}=\tilde{\eta}\left(\tilde{\Theta}\right)$ for $c\notin C\left(P^{*}\right)$,
and $q_{c}^{\dagger}=q_{c}$ for $c\in C\left(P^{*}\right)$. That
is, we increase the capacity of colleges that were under-demanded
under $P^{*}$ to have sufficient capacity to admit all students.
For $c\notin C\left(P^{*}\right)$ we have $P_{c}^{\dagger}=P_{c}^{*}=0$
and $D_{c}\left(P^{*}\mid\tilde{\eta}\right),D_{c}\left(P^{\dagger}\mid\tilde{\eta}\right)\leq\tilde{\eta}\left(\tilde{\Theta}\right)=q_{c}^{\dagger}$.
For all $c\in C\left(P^{*}\right)$ we have that $D_{c}\left(P^{*}\mid\tilde{\eta}\right)=q_{c}=q_{c}^{\dagger}$,
and $D_{c}\left(P^{\dagger}\mid\tilde{\eta}\right)\leq q_{c}=q_{c}^{\dagger}$
with $P^{\dagger}=0$ if $D_{c}\left(P^{\dagger}\mid\tilde{\eta}\right)<q_{c}$
because $\mathcal{Z}\left(P^{\dagger}\mid q,\tilde{\eta}\right)=0$.
Thus, $P^{*},P^{\dagger}$ are market clearing for $\tilde{E}^{\dagger}$. 

By the lattice structure, $P^{+}=\max\left\{ P^{*},P^\dagger\right\} $ and
$P^{-}=\min\left\{ P^{*},P^\dagger\right\} $ are also market clearing cutoffs
for $\tilde{E}^{\dagger}$. Without loss, suppose that $P^{+}\neq P^{*}$.
By the lattice structure of market-clearing cutoffs and the rural
hospital theorem \citep{azevedo2016supply}, $D\left(P^{*}\mid\tilde{\eta}\right)=D\left(P\mid\tilde{\eta}\right)$
for any $P\in\text{Conv}\left(P^{*},P^{+}\right)$. This implies that
if $\partial_{P}D\left(P^{*},s^{*}\right)$ exists, it must be singular.
This contradicts the theorem's assumption, thereby proving the claim.

\paragraph*{Claim 2:}

For any $\varepsilon>0$ and $m>0$, there exists $N$ such that for
any $n>N$ with probability of at least $1-\varepsilon$, demand in
the sampled economy $\tilde{F}^{n}$ satisfies 
\[
\sup_{P\in\left[0,1\right]^{\mathcal{C}}}\bigl\Vert D\left(P\mid\tilde{\eta}^{n}\right)-D\left(P\mid\tilde{\eta}\right)\bigr\Vert_{\infty}\leq m\,.
\]

\,

The claim follows from the Vapnik-Chervonenkis Theorem \citep{vapnik1972uniform}
and the fact that the $n$-th shatter coefficient (which is the maximum
number of possible ways $n$ students can be assigned to $J$ colleges
by some cutoffs $P$) is bounded by $a\cdot n^{J}$ for some $a>0$.
Namely, there exists $a>0$ such that the probability that demand
deviates by more than $m$ for some $P$ is bounded by
\begin{equation}
\Pr\{\sup_{P\in\left[0,1\right]^{\mathcal{C}}}\left\Vert D\left(P\mid\tilde{\eta}^{n}\right)-D\left(P\mid\tilde{\eta}\right)\right\Vert _{\infty}>m\}\leq a\cdot n^{J}\cdot e^{-nm^{2}/32}.\label{eq:bound-diff-demand}
\end{equation}
Because $a\cdot n^{J}\cdot e^{-nm^{2}/32}\underset{n\to\infty}{\rightarrow}0$, there exists $N$ such that this probability is less than $\varepsilon$ for any $n>N$.

\paragraph*{Claim 3:}

For any $\varepsilon_{1},\varepsilon_{2}>0$, there is $N$ such that
for any $n>N$ with probability of at least $1-\varepsilon_{2}$ the
sampled economy $\tilde{F}^{n}$ has a market clearing cutoff in $\mathcal{N}\left(P^{*}\right)\cap\mathcal{B}_{\varepsilon_{1}}\left(P^{*}\right)$
.

\,

Let there be $\varepsilon_{1},\varepsilon_{2}>0$. By Claim 1, there
exists $m_{1}>0$ such that for any $P\in\mathcal{N}\left(P^{*}\right)\setminus\mathcal{B}_{\varepsilon_{1}}\left(P^{*}\right)$
we have that $\mathcal{Z}\left(P\mid q,\tilde{\eta}\right)\geq m_{1}$.
If $C\left(P^{*}\right)\neq\mathcal{C}$, let $m_{2}=\min_{c\notin C\left(P^{*}\right)}\left\{ q_{c}-D_{c}\left(P^{*}\mid\tilde{\eta}\right)\right\} $
and define $m=\min\left\{ m_{1}/4,m_{2}/4J\right\} $. If $C\left(P^{*}\right)=\mathcal{C}$,
define $m=m_{1}/4$. 

By Claim 2, there exists $N$ such that for all $n>N$ we have that
with probability of at least $1-\varepsilon_{2}$
\begin{align}
\label{eq:bound_from_2}
\left|D_{c}\left(P\mid\tilde{\eta}^{n}\right)-D_{c}\left(P\mid\tilde{\eta}\right)\right|\leq m   
\end{align}
holds for all $P\in\left[0,1\right]^{\mathcal{C}}$ and $c\in\mathcal{C}$.
In addition, we take $N>1/m$ so that $\left\Vert q-q^{n}\right\Vert \leq1/n<m$
for any $n>N$. Let $\tilde{F}^{n}$ be such an economy. Because $\tilde{F}^{n}$ is a standard economy, it has a stable matching with associated market-clearing cutoffs $P^{n}$. We will show that $P^{n}\in\mathcal{N}\left(P^{*}\right)\cap\mathcal{B}_{\varepsilon_{1}}\left(P^{*}\right)$.

When $C\left(P^{*}\right)=\mathcal{C},$ $P^{n}\in\mathcal{N}\left(P^{*}\right)$
trivially holds. To see that $P^{n}\in\mathcal{N}\left(P^{*}\right)$
when $C\left(P^{*}\right)\neq\mathcal{C}$ we calculate the cutoffs
for an economy with weakly higher demand than $\tilde{F}^{n}$ through
vacancy chains (\citealt{blum1997vacancy}, \citealt{dworczak2016deferred})
and show that these vacancy chains cannot fill the capacity of a college
$c\notin C\left(P^{*}\right)$.\footnote{While $\tilde{F}^{n}$ is a discrete economy, we can construct an
economy with non-atomic students by replacing each atom $\theta=\left(\succ^{\theta},r^{\theta}\right)$
of $\tilde{\eta}^{n}$ with a uniform positive density over the set
of student types $\left\{ \left(\succ,r\right)\mid\succ=\succ^{\theta},\,r_{c}\in\left[r_{c}^{\theta}-\delta,r_{c}^{\theta}\right]\right\} $
for sufficiently small $\delta$. Because all college capacities in
$\tilde{F}^{n}$ are multiples of $1/n$, cutoffs never split the
mass corresponding to a single student. } 

By inequality \ref{eq:bound_from_2}, demand under $\tilde{\eta}^{n}$ is uniformly bounded by the demand
from $\tilde{\eta}$ plus the demand from some mass of $m\cdot J\leq m_{2}/4$
additional students $\tilde{\eta}'$.\footnote{For example, construct such $\tilde{\eta}'$ by taking for each $c\in\mathcal{C}$ a mass $m$ of top-ranked students who most prefer $c$. } Consider the standard economy $E^{\circ}=[\tilde{\Theta},\mathcal{C},\tilde{\eta}+\tilde{\eta}',q^{n}]$.
We can find a market clearing cutoffs for $E^{\circ}$ by starting from having the mass of students $\tilde{\eta}$ assigned according to $P^{*}$ and adjust cutoffs by (i) starting rejection chains by having all the colleges for which $q_{c}^{n}<q_{c}$ reject their lowest ranked students and letting students apply until rejection chains are resolved, and then (ii) starting rejection chains by having all the students in $\tilde{\eta}'$ apply. Because the rejection chains in (i) and (ii) can each add at most $m_{2}/4$ to the mass assigned to any college, these rejection chains produce market clearing cutoffs for $E^{\circ}$ under which any college $c\notin C\left(P^{*}\right)$ did not fill its capacity. We can also find market clearing cutoffs for $E^{\circ}$ by starting from having the mass $\tilde{\eta}^{n}$ assigned according to their demand under $P^{n}$, and starting rejection chains from the additional demand from $\tilde{\eta}+\tilde{\eta}'$. 
The resulting market clearing cutoffs for $E^{\circ}$ must be weakly larger than $P^{n}$, and equal to $0$ for any $c\notin C\left(P^{*}\right)$ by the rural hospital theorem. Therefore, $P^{n}\in\mathcal{N}\left(P^{*}\right)$.

To see that $P^{n}\in\mathcal{N}\left(P^{*}\right)\cap\mathcal{B}_{\varepsilon_{1}}\left(P^{*}\right)$,
observe that for any $P\in\mathcal{N}\left(P^{*}\right)\setminus\mathcal{B}_{\varepsilon_{1}}\left(P^{*}\right)$
we have that {\footnotesize{}
\begin{align*}
\mathcal{Z}\left(P\mid q^{n},\tilde{\eta}^{n}\right) & =\max_{c\in C\left(P^{*}\right)}\left\{ P\cdot\left[q_{c}^{n}-D_{c}\left(P\mid\tilde{\eta}^{n}\right)\right]^{+}+\left[D_{c}\left(P\mid\tilde{\eta}^{n}\right)-q_{c}^{n}\right]^{+}\right\} \\
 & \geq\max_{c\in C\left(P^{*}\right)}\left\{ P\cdot\left[q_{c}-D_{c}\left(P\mid\tilde{\eta}\right)\right]^{+}+\left[D_{c}\left(P\mid\tilde{\eta}\right)-q_{c}\right]^{+}\right\} -\frac{1}{n}-m\\
 & =\mathcal{Z}\left(P\mid q,\tilde{\eta}\right)-\frac{1}{n}-m\\
 & \geq m_{1}/2>0\,,
\end{align*}
}and thus $P^{n}$ is not market clearing. Therefore, it must be that
$P^{n}\in\mathcal{N}\left(P^{*}\right)\cap\mathcal{B}_{\varepsilon_{1}}\left(P^{*}\right)$.

\paragraph*{Claim 4: }

For any $\varepsilon>0$, there is $N$ such that for any $n>N$ with
probability of at least $1-\varepsilon$ the sampled economy $F^{n}$
has an $\varepsilon$-stable matching. 

\,

Let there be $\varepsilon>0$. Denote $M=\sup_{x\in\mathcal{X}}\left\Vert x\right\Vert _{\infty}$,
and let $\varepsilon_{1}=\frac{\varepsilon}{2\alpha\left(6MJ+1\right)}>0$.
By the central limit theorem, for every $c\in\mathcal{C}$ 
\[
\int_{\theta\in\mu^{*}\left(c\right)}\chi^{\theta}d\eta^{n}\overset{d}{\to}\int_{\theta\in\mu^{*}\left(c\right)}\chi^{\theta}d\eta
\]
\[
\eta^{n}\left(\left\{ \theta\mid r_{c}^{\theta}\in\left[P_{c}^{*}-\varepsilon_{1},P_{c}^{*}+\varepsilon_{1}\right]\right\} \right)\overset{d}{\to}\min\left\{2\varepsilon_{1},P_{c}^{*}+\varepsilon_{1}\right\}\,.
\]
Therefore, there exists $N_{1}$ such that for every $n>N_{1}$ with
probability of at least $1-\varepsilon/2$ we have that for economy
$F^{n}$ it holds that 
\begin{equation}
\left|\int_{\theta\in\mu^{*}\left(c\right)}\chi^{\theta}d\eta^{n}-\int_{\theta\in\mu^{*}\left(c\right)}\chi^{\theta}d\eta\right|\leq\varepsilon_{1}\label{eq:bound-diver-stats}
\end{equation}
and for all $c\in\mathcal{C}$
\begin{equation}
\eta^{n}\left(\left\{ \theta\mid r_{c}^{\theta}\in\left[P_{c}^{*}-\varepsilon_{1},P_{c}^{*}+\varepsilon_{1}\right]\right\} \right)\leq3\varepsilon_{1}\,.\label{eq:bound-diver-cutoffs}
\end{equation}
By Claim 3, for any $\varepsilon_{1}>0$ there exists $N_{2}$ such
that for any $n>N_{2}$ with probability of at least $1-\varepsilon/2$
the sampled economy $\tilde{F}^{n}$ has a market clearing cutoff
$P^{n}$ for $F^{n}$ such that $P^{n}\in\mathcal{N}\left(P^{*}\right)\cap\mathcal{B}_{\varepsilon_{1}}\left(P^{*}\right)$.\\
\indent Therefore, for $n>N=\max\left\{ N_{1},N_{2}\right\} $ with
probability of at least $1-\varepsilon$ the economy $F^{n}$ satisfies
(\ref{eq:bound-diver-stats}), (\ref{eq:bound-diver-cutoffs}) for
all $c\in\mathcal{C}$, and has a market clearing cutoff $P^{n}$
such that $P^{n}\in\mathcal{N}\left(P^{*}\right)\cap\mathcal{B}_{\varepsilon_{1}}\left(P^{*}\right)$.
We show that in such economies the matching $\mu$ defined by $\mu^{n}\left(\theta\right)=D^{\theta}\left(P^{n},s^{*}\right)$
is $\varepsilon$-stable. Let $\mu^{*}$ for $E$ be defined by $\mu^{*}\left(\theta\right)=D^{\theta}\left(P^{*},s^{*}\right)$
and let $s^{n}=\sigma\left(\mu^{n}\mid\eta^{n}\right)$.\\
\\
\indent We have that {\footnotesize{}
\begin{align*}
\left|s_{c}^{n}-s_{c}^{*}\right| & =\left|\sigma_{c}\left(\mu^{n}\mid\eta^{n}\right)-\sigma_{c}\left(\mu^{*}\mid\eta\right)\right|\\
 & =\left|\int_{\theta\in\mu^{n}\left(c\right)}\chi^{\theta}d\eta^{n}-\int_{\theta\in\mu^{*}\left(c\right)}\chi^{\theta}d\eta\right|\\
 & \leq\left|\int_{\theta\in\mu^{n}\left(c\right)}\chi^{\theta}d\eta^{n}-\int_{\theta\in\mu^{*}\left(c\right)}\chi^{\theta}d\eta^{n}\right|+\left|\int_{\theta\in\mu^{*}\left(c\right)}\chi^{\theta}d\eta^{n}-\int_{\theta\in\mu^{*}\left(c\right)}\chi^{\theta}d\eta\right|\\
 & \leq2M\eta^{n}\left(\mu^{n}\left(c\right)\oplus\mu^{*}\left(c\right)\right)+\varepsilon_{1}\\
 & \leq\left(6MJ+1\right)\varepsilon_{1}\\
 & =\frac{\varepsilon}{2\alpha}\,,
\end{align*}
}where the last inequality follows because {\footnotesize{}
\begin{align*}
\eta^{n}\left(\mu^{n}\left(c\right)\oplus\mu^{*}\left(c\right)\right) & \leq\eta^{n}\left(\left\{ \theta\mid\mu^{n}\left(\theta\right)\neq\mu^{*}\left(\theta\right)\right\} \right)\\
 & =\eta^{n}\left(\left\{ \theta\mid D^\theta\left(P^{n},s^{*}\mid \eta^n\right)\neq D^{\theta}\left(P^{*},s^{*}\mid \eta \right)\right\} \right)\\
 & \leq\sum_{c}\eta^{n}\left(\left\{ \theta\mid r_{c}^{\theta}\in\left[\min\left(P_{c}^{*},P{}_{c}^{n}\right),\max\left(P_{c}^{*},P{}_{c}^{n}\right)\right]\right\} \right)\\
 & \leq\sum_{c}\eta^{n}\left(\left\{ \theta\mid r_{c}^{\theta}\in\left[P_{c}^{*}-\varepsilon_{1},P_{c}^{*}+\varepsilon_{1}\right]\right\} \right)\\
 & \leq J\cdot3\varepsilon_{1}\,.
\end{align*}
}To see that the matching $\mu$ is $\varepsilon$-stable, consider
a student $\theta$ who can form a blocking pair with college $c$.
By definition, $c\in B^{\theta}\left(P^{n}\right)$ and we have that{\footnotesize{}
\begin{align*}
u^{\theta}\left(\mu^{n}\left(\theta\right);s^{n}\right) & \geq u^{\theta}\left(\mu^{n}\left(\theta\right);s^{*}\right)-\alpha\left\Vert s^{n}-s^{*}\right\Vert _{\infty}\\
 & \geq u^{\theta}\left(\mu^{n}\left(\theta\right);s^{*}\right)-\varepsilon/2\\
 & \geq u^{\theta}\left(c;s^{*}\right)-\varepsilon/2\\
 & \geq u^{\theta}\left(c;s^{n}\right)-\alpha\left\Vert s^{n}-s^{*}\right\Vert _{\infty}-\varepsilon/2\\
 & \geq u^{\theta}\left(c;s^{n}\right)-\varepsilon
\end{align*}
}where the first and fourth inequalities follow because utilities
are $\alpha$ peer-smooth at $s^{*}$, and the third inequality follows
because $u^{\theta}\left(\mu\left(\theta\right);s^{*}\right)=u^{\theta}\left(D^{\theta}\left(P^{n},s^{*}\right)\right)=\max_{c\in B^{\theta}\left(P^{n}\right)}\left\{ u^{\theta}\left(c;s^{*}\right)\right\} $.
\end{proof}
\vspace{-1cm}

\subsubsection*{Additional Proofs}
\begin{proof}[Proof of Lemma (\ref{lem:linear-economy-is-smooth})]
 Suppose that $F$ has a bounded density $f$ with $\left|f\left(x\right)\right|\leq\bar{f}$
for all $x\in\mathbb{R}$. Let there be $\varepsilon>0$. There exists
$\bar{\beta}$ such that $\eta\left(\left\{ \theta\mid\left\Vert \beta^{\theta}\right\Vert _{\infty}>\bar{\beta}\right\} \right)<\varepsilon/2$.
We have that if $\left\Vert s-s'\right\Vert _{\infty}<\varepsilon/\left(4J^{2}L\bar{\beta}\bar{f}\right)$
then{\footnotesize{}
\begin{align*}
\eta\left(\left\{ \theta\mid\succ^{\theta|s}\neq\succ^{\theta|s'}\right\} \right) & \leq\varepsilon/2+\eta\left(\left\{ \theta\mid\left\Vert \beta^{\theta}\right\Vert _{\infty}\leq\bar{\beta},\,\succ^{\theta|s}\neq\succ^{\theta|s'}\right\} \right)\\
 & \leq\varepsilon/2+\sum_{c_{1},c_{2}\in\mathcal{C}}\eta\left(\left\{ \theta\left|\left\Vert \beta^{\theta}\right\Vert _{\infty}\leq\bar{\beta},\begin{array}{c}
u^{\theta}\left(c_{1};s\right)<u^{\theta}\left(c_{2};s\right)\\
u^{\theta}\left(c_{1};s'\right)>u^{\theta}\left(c_{2};s'\right)
\end{array}\right.\right\} \right)\\
 & =\varepsilon/2+\sum_{c_{1},c_{2}\in\mathcal{C}}\eta\left(\left\{ \theta\left|\left\Vert \beta^{\theta}\right\Vert _{\infty}\leq\bar{\beta},\begin{array}{c}
g_{c_{1}}^{\theta}+\beta^{\theta}\cdot s_{c_{1}}+\xi_{c_{1}}^{\theta}<g_{c2}^{\theta}+\beta^{\theta}\cdot s_{c_{2}}+\xi_{c_{2}}^{\theta}\\
g_{c_{1}}^{\theta}+\beta^{\theta}\cdot s'_{c_{1}}+\xi_{c_{1}}^{\theta}>g_{c2}^{\theta}+\beta^{\theta}\cdot s'_{c_{2}}+\xi_{c_{2}}^{\theta}
\end{array}\right.\right\} \right)\\
 & =\varepsilon/2+\sum_{c_{1},c_{2}\in\mathcal{C}}\eta\left(\left\{ \theta\left|\left\Vert \beta^{\theta}\right\Vert _{\infty}\leq\bar{\beta},\begin{array}{c}
\xi_{c_{1}}^{\theta}<g_{c2}^{\theta}-g_{c_{1}}^{\theta}+\xi_{c_{2}}^{\theta}+\beta^{\theta}\cdot\left(s_{c_{2}}-s_{c_{1}}\right)\\
\xi_{c_{1}}^{\theta}>g_{c2}^{\theta}-g_{c_{1}}^{\theta}+\xi_{c_{2}}^{\theta}+\beta^{\theta}\cdot\left(s'_{c_{2}}-s'_{c_{1}}\right)
\end{array}\right.\right\} \right)\\
 & \leq\varepsilon/2+\sum_{c_{1},c_{2}\in\mathcal{C}}\bar{f}\sup_{\left\Vert \beta^{\theta}\right\Vert _{\infty}\leq\bar{\beta}}\left(\beta^{\theta}\cdot\left(s_{c_{2}}-s_{c_{1}}\right)-\beta^{\theta}\cdot\left(s'_{c_{2}}-s'_{c_{1}}\right)\right)\\
 & \leq\varepsilon/2+2J^{2}L\bar{\beta}\left\Vert s-s'\right\Vert _{\infty}\bar{f}\\
 & \leq\varepsilon\,,
\end{align*}
}{\footnotesize\par}

and thus $E$ satisfies Assumption (\ref{ass:(Continuity with Stats)}).

For the second part of the lemma, suppose that $\beta_{\max}=\sup_{\theta\in\Theta}\left\Vert \beta^{\theta}\right\Vert _{\infty}<\infty$.
Then 
\begin{align*}
\left|u^{\theta}\left(c;s\right)-u^{\theta}\left(c;s'\right)\right| & =\left|\left(g_{c_{1}}^{\theta}+\beta^{\theta}\cdot s_{c_{1}}+\xi_{c_{1}}^{\theta}\right)-\left(g_{c_{1}}^{\theta}+\beta^{\theta}\cdot s'_{c_{1}}+\xi_{c_{1}}^{\theta}\right)\right|\\
 & =\left|\beta^{\theta}\cdot\left(s_{c_{1}}-s'_{c_{1}}\right)\right|\\
 & \leq L\beta_{\max}\left\Vert s-s'\right\Vert _{\infty}
\end{align*}
and we have that utilities are $L\beta_{\max}$ peer-smooth. 
\end{proof}

\section{Omitted Calculations\label{sec:Omitted-Calculations}}

\subsubsection*{Calculating all the stable matchings for Example \ref{exa:multiple-matchings}}

Observe that for any fixed matching statistics $s$, there are unique
market-clearing cutoffs, because both colleges have identical preferences
over students. We have that 
\begin{align*}
u^{\theta}\left(c_{1};s\right)-u^{\theta}\left(c_{2};s\right) & =g_{c_{1}}^{\theta}+16\chi^{\theta}s_{c_{1}}-g_{c_{2}}^{\theta}-16\chi^{\theta}s_{c_{2}}\\
 & =16\chi^{\theta}\left(s_{c_{1}}-s_{c_{2}}\right)-1\,.
\end{align*}
If $s_{c_{2}}>s_{c_{1}}$ all students $\theta$ have preferences
$c_{2}\succ^{\theta|s}c_{1}$. The unique market clearing cutoffs
are $P^{\circ}=\left(0,1/2\right)$, yielding the stable matching
\[
\mu^{\circ}\left(\theta\right)=\begin{cases}
c_{2} & \chi^{\theta}\geq1/2\\
c_{1} & \chi^{\theta}<1/2
\end{cases}
\]
 and the matching statistics $s^{\circ}=\left(1/8,3/8\right)$. 

If $s_{c_{1}}>s_{c_{2}}$ a student $\theta$ has preferences $c_{1}\succ^{\theta|s}c_{2}$
if $\chi^{\theta}>a$ for $a=\frac{1}{16\left(s_{c_{1}}-s_{c_{2}}\right)}$.
If $a<1-q_{1}=1/2$ then the unique market clearing cutoffs are $P^{\dagger}=\left(1/2,0\right)$
yielding the matching 
\[
\mu^{\dagger}\left(\theta\right)=\begin{cases}
c_{1} & \chi^{\theta}\geq1/2\\
c_{2} & \chi^{\theta}<1/2
\end{cases}
\]
 and the matching statistics $s^{\dagger}=\left(3/8,1/8\right)$.
To verify $\mu^{\dagger}$ is a stable matching, calculate that given $s^{\dagger}$
we have that $a^{\dagger}=\frac{1}{16\left(s_{c_{1}}^{\dagger}-s_{c_{2}}^{\dagger}\right)}=1/4<1/2$. 

If $1/2\leq a\leq1$ the unique market clearing cutoffs are $P^{a}=\left(0,a-1/2\right)$
yielding the matching 
\[
\mu^{a}\left(\theta\right)=\begin{cases}
c_{1} & \chi^{\theta}\geq a\text{ or }\chi^{\theta}<a-1/2\\
c_{2} & a-1/2\leq\chi^{\theta}<a
\end{cases}
\]
 because students in $[a,1]$ prefer $c_{1}$, students in $\left[a-1/2,a\right]$
prefer $c_{2}$, and students in $\left[0,a-1/2\right]$ have a budget
set equal to $\left\{ c_{1}\right\} $. The corresponding matching
statistics are $s^{a}=\sigma\left(\mu^{a}\right)=\left(\frac{5-4a}{8},\frac{4a-1}{8}\right)$.
We have that $a^{\vartriangle}=\left(3+\sqrt{5}\right)/8\approx0.65$
is the unique $1/2\leq a\leq1$ such that 
\[
a=\frac{1}{16\left(s_{c_{1}}^{a}-s_{c_{2}}^{a}\right)}=\frac{1}{12-16a}\;.
\]
Therefore, the last stable matching is 
\[
\mu^{\vartriangle}\left(\theta\right)=\begin{cases}
c_{1} & \chi^{\theta}\geq a^{\vartriangle}\text{ or }\chi^{\theta}<a^{\vartriangle}-1/2\\
c_{2} & a^{\vartriangle}-1/2\leq\chi^{\theta}<a^{\vartriangle}
\end{cases}
\]
and the corresponding matching statistics are $s^{\vartriangle}=\left(\frac{5}{8}-\frac{1}{2}a^{\vartriangle},\frac{1}{2}a^{\vartriangle}-\frac{1}{8}\right)=\left(\frac{7-\sqrt{5}}{16},\frac{1+\sqrt{5}}{16}\right)\approx\left(0.3,0.2\right)$.
\end{document}